\RequirePackage{tikz-cd}
\documentclass[sn-mathphys-num]{sn-jnl}
\usepackage{lineno,hyperref}
\usepackage{url}
\usepackage{stmaryrd}
\usepackage{tikz-cd}
\usepackage{yfonts}
\usepackage[utf8]{inputenc}
\usepackage[T1]{fontenc}
\usepackage [autostyle]{csquotes}
\usepackage{caption}
\usepackage{amsfonts}
\usepackage{amsthm}
\usepackage{eqnarray}
\usepackage{mathtools}

\usepackage{float}
\usepackage{amssymb}
\usepackage{amsmath}
\usepackage{enumerate}
\usepackage{lipsum}
\theoremstyle{thmstyleone}
\newtheorem{thm}{Theorem}
\newtheorem{lem}{Lemma}
\newtheorem{prop}{Proposition}
\newtheorem{cor}{Corollary}
\newtheorem{defn}{Definition}

\newtheorem{rem}{Remark}

\begin{document}
		\title{A Bitopological Approach to Finite Reduction and Bounded
			Exact-Value Certificates for Fitting's Finite Heyting-valued Modal Logic}{}
		\author*[1,2]{\fnm{Litan Kumar} \sur{Das}}\email{ld06iitkgp@gmail.com}
		\author[2,3]{\fnm{Kumar Sankar} \sur{Ray}}\email{ksray@isical.ac.in}
		\author[1,2]{\fnm{Prakash Chandra} \sur{Mali}}\email{pcmali1959@gmail.com}
		\affil*[1]{\orgdiv{Department of Mathematics}, \orgname{Jadavpur University}, \orgaddress{\street{Jadavpur}, \city{Kolkata}, \postcode{700032}, \state{West Bengal}, \country{India}}}
		
		\affil[2]{\orgdiv{ECSU}, \orgname{Indian Statistical Institute, Kolkata}, \orgaddress{\street{} \city{Kolkata}, \postcode{700108}, \state{West Bengal}, \country{India}}}
		
		\affil[3]{\orgdiv{Department of Mathematics}, \orgname{Jadavpur University}, \orgaddress{\street{Jadavpur}, \city{Kolkata}, \postcode{700032}, \state{West Bengal}, \country{India}}}
		\abstract{Fitting's finite Heyting-valued modal logic interprets modal formulas
			over a finite Heyting algebra. We use a relational bitopological representation to obtain a finite-state reduction. For a finite model and a finite vocabulary, the modal subalgebra generated by the atomic valuations determines a state-evaluation map. We prove that the
			observational quotient is isomorphic to its finite image in the
			bitopological dual and that the quotient relation is the restriction
			of the canonical dual relation. Hence every formula over the
			vocabulary preserves its exact truth value, and the quotient is
			minimal among surjective reductions through which all generated
			observations factor. In addition, for any formula and state, we
			construct a finite tree-like exact-value certificate whose depth is
			bounded by modal depth and whose branching depends only on the height
			of the truth-value algebra and the number of boxed subformulas. Failed
			formulas therefore admit bounded reduced counterexamples preserving
			their precise failure values.}
		\keywords{Bitopology, Fitting's modal logic, finite-state reduction,
			exact-value certificate, counterexample extraction}
\maketitle
\section{Introduction}
Many-valued modal logics extend ordinary modal logic by allowing
formulas to take values in an ordered algebra rather than only in the
two-element Boolean algebra. In Fitting's finite-valued modal logic,
truth values belong to a fixed finite Heyting algebra $\mathcal L$, and the
necessity operator $\Box$ is interpreted on a Kripke frame by taking the meet
of the truth values at all accessible states \cite{fitting1991many}. This
semantics preserves intermediate truth values and provides a natural
setting for modal information that may be graded, partially ordered,
incomplete, or uncertain. The algebraic and representation-theoretic foundations of Fitting's many-valued modal logic are already well developed. Maruyama studied the algebraic
semantics of lattice-valued modal logic and obtained topological and
natural-duality representations for the corresponding modal algebras
\cite{maruyama2009algebraic,maruyama2011dualities,maruyama2012natural}. These results extend the
classical representation theory of modal algebras
\cite{blackburn2001modal,hansoul1983duality,jonsson1951boolean}. A bitopological duality for the
non-modal algebras of Fitting's finite-valued logic was subsequently
developed in \cite{das2021bitopological}. Thus, the purpose of the present paper
is not to propose another general duality theory. We use a concise
relational bitopological representation to address a specific finite
reduction problem for Fitting models. Finite-state reduction is important in verification and model
checking because a system may contain different states that cannot be
distinguished by the observations relevant to a chosen specification
\cite{paige1987three,grumberg1999model}. More recent generic
partition-refinement methods provide uniform procedures for reducing
relational, weighted, probabilistic, and automata-based systems
\cite{wissmann2020efficient}.
General connections between logical observations, algebraic substructures, and minimal
state-space quotients have also been studied using duality
\cite{bezhanishvili2020minimisation}. The preservation of non-Boolean truth values is also important in
many-valued verification. Multi-valued symbolic model checking has
been developed for systems containing incomplete or inconsistent
information \cite{chechik2003multi}, and later work has considered
multi-valued and fuzzy temporal model checking over generalized
Kripke structures
\cite{li2019computation,pan2025incremental}. These works provide a
broader computational motivation for retaining intermediate truth
values, although they do not study the bitopological reduction
constructed here.\\
Tree-like model constructions are standard tools in modal logic
\cite{blackburn2001modal}. In model checking, tree-like
counterexamples have been studied for branching-time specifications
\cite{clarke2002tree}, and counterexample generation has also been
extended to multi-valued semantics
\cite{gurfinkel2003generating}. These results concern different
languages and verification problems. The certificate theorem proved
in Section \ref{5} instead exploits the finite height of $\mathcal L$ to bound the
number of successors needed to preserve the exact meet value of each
$\Box$-subformula.
In many-valued settings, quotient
constructions have been investigated through multi-valued
bisimulations and related reduction methods
\cite{hui2019multi,stankovic2023simulations}. Filtration and
finite-model constructions have likewise been considered for
many-valued modal systems \cite{lin2023many}. Recent work in a
different finite-valued modal framework has also established
completeness, the finite-model property, and decidability
\cite{karniel2024many}. Therefore, neither
finite-state reduction nor the general kernel-image method is new by
itself.
The problem considered here is more specific. Let
\[
M=(W,\mathcal R,V)
\]
be a finite $\mathcal L$-valued Kripke model, and let
$\Sigma\subseteq\mathsf{Prop}$ be a finite propositional vocabulary.
Inside the modal algebra $\mathcal L^W$, consider the $\mathcal L$-modal subalgebra \[
A_M^\Sigma
\]
generated by the atomic valuation functions $V(p)$, $p\in\Sigma$.
The elements of $A_M^\Sigma$ are exactly the semantic
$\mathcal L$-valued modal observations generated from the chosen vocabulary. Every state $x\in W$ determines an evaluation homomorphism
\[
\eta_M^\Sigma(x)\colon A_M^\Sigma\longrightarrow \mathcal L,
\qquad
\eta_M^\Sigma(x)(f)=f(x).
\]
Accordingly, two states $x,y\in W$ are observationally equivalent when
\[
\eta_M^\Sigma(x)=\eta_M^\Sigma(y),
\]
or, equivalently, when
\[
f(x)=f(y)
\qquad\text{for every }f\in A_M^\Sigma.
\]
Taking the kernel of an evaluation map and forming its quotient is a
standard construction. The new point of the present work is not this
set-theoretic quotient alone. The bitopological representation equips
\[
G(A_M^\Sigma)
=
\operatorname{Hom}_{\mathcal{VA}_{\mathcal L}}(A_M^\Sigma,\mathcal L)
\]
with two canonical topologies, a subalgebra-indexed structure map, and a canonical binary relation representing the modal operation. We prove that the observational quotient is realized exactly by the finite
image
\[
X_M^\Sigma=\eta_M^\Sigma[W]
\]
inside this dual space. More precisely, if $\mathcal R^{\mathrm{red}}$ is the
relation induced on $W/{\equiv_M^\Sigma}$ and
$\mathcal R_{A_M^\Sigma}$ is the canonical relation on the bitopological dual of $A_M^\Sigma$, then
\[
[x]\mathcal R^{\mathrm{red}}[y]
\quad\Longleftrightarrow\quad
\eta_M^\Sigma(x)
\mathcal R_{A_M^\Sigma}
\eta_M^\Sigma(y).
\]
Consequently, the natural bijection
\[
\overline{\eta}_M^\Sigma\colon
W/{\equiv_M^\Sigma}
\longrightarrow X_M^\Sigma,
\qquad
\overline{\eta}_M^\Sigma([x])
=
\eta_M^\Sigma(x),
\]
is an isomorphism of the corresponding finite relational
bitopological structures. This identification is the main
representation-specific contribution of the paper.\\
The purpose of the construction is to show that the bitopological
representation determines both parts of the reduction. The evaluation
map determines which states must be identified, while the canonical
dual relation determines the transition relation on the reduced
states. Thus the reduced model is not introduced as an arbitrary
quotient of the original Kripke model; it is recovered as the finite
relational image associated with the modal observation algebra
$A_M^\Sigma$.\\
Using this identification, we prove exact truth-value preservation:
for every formula
$\varphi\in\operatorname{Form}(\Sigma)$ and every $x\in W$,
\[
\llbracket\varphi\rrbracket_M(x)
=
\llbracket\varphi\rrbracket_{M^{\mathrm{red}}}([x]).
\]
The result preserves every element of $\mathcal L$, not only designated truth or truth at value $1$. Hence all intermediate truth values carried by the original model remain unchanged after reduction. We also establish a universal factorization property: among surjective reductions through which every member of $A_M^\Sigma$ factors, the constructed quotient has the least number of states. This minimality is relative to the chosen vocabulary and its generated modal observations.\\
A second result concerns formula-relative exact-value certificates.
Let $h(\mathcal L)$ denote the maximum number of elements in a chain
of $\mathcal L$, and let $m_{\Box}(\varphi)$ be the number of distinct
boxed subformulas of $\varphi$. For every $\mathcal L$-valued model
$M$, every state $x$, and every formula $\varphi$ of modal depth $d$,
we construct a finite rooted tree-like model
\[
\mathcal C(M,x,\varphi)
\]
such that
\[
\llbracket\varphi\rrbracket_{\mathcal C}(\varepsilon)
=
\llbracket\varphi\rrbracket_M(x).
\]
Every node of this certificate has at most
\[
b_{\mathcal L,\varphi}
=
\bigl(h(\mathcal L)-1\bigr)m_{\Box}(\varphi)
\]
successors, and consequently
\[
|W_{\mathcal C}|
\leq
1+b_{\mathcal L,\varphi}
+\cdots+
b_{\mathcal L,\varphi}^{\,d}.
\]
This bound is independent of the number of states of the original
model. If
$\llbracket\varphi\rrbracket_M(x)\neq1$, applying the observational
reduction to the certificate produces a finite reduced counterexample
preserving the same failure value.\\
The first main result identifies the finite
observational quotient of a Fitting model with the relational
bitopological image of its generated modal algebra. The second is a
height-sensitive exact-value certificate theorem whose size bound
depends only on $\mathcal L$ and the syntactic structure of the
formula, and not on the size of the original model.\\

The paper is organized as follows. Section \ref{2} recalls the
required algebraic, relational, and bitopological notions.
Section \ref{3} presents the relational bitopological representation
of $\mathcal L$-valued modal algebras. Section \ref{4} constructs the
finite observational quotient, identifies it with the finite dual
image, and proves relation compatibility, exact truth-value
preservation, and the associated minimality property. A finite
three-valued example illustrates the construction at the end of
Section \ref{4}. Section \ref{5} proves the meet-compression result,
constructs bounded exact-value certificates, and derives finite
reduced counterexamples. Section \ref{6} concludes the paper and
indicates directions for further work.

\section{Preliminaries}\label{2}
	We recall the algebraic, relational, and bitopological notions required
	in the sequel. Standard references for universal algebra, lattice
	theory, and category theory are
	\cite{burris1981course,davey2002introduction,adamek1990abstract}.\\
	Throughout the paper, \(\mathcal L\) denotes a fixed bounded finite
	distributive lattice with least element \(0\) and greatest element \(1\).
	Since \(\mathcal L\) is finite, for every \(a,b\in \mathcal L\) the
	element
	\[
	a\to b=\bigvee\{c\in \mathcal L: a\wedge c\leq b\}
	\]
	exists. Hence \(\mathcal L\) is a finite Heyting algebra. 
	 We write
	\[
	a\leftrightarrow b
	=
	(a\to b)\wedge(b\to a).
	\]
	\subsection{Fitting's finite-valued modal logic}
	We recall the basic syntax and algebraic semantics of Fitting's
	finite Heyting-valued modal logic \cite{fitting1991many} in the form used by Maruyama \cite{maruyama2009algebraic}. Let \(\mathsf{Prop}\) be a set of
	propositional variables. The language \(\mathcal L\)-\(\mathcal{VL}\)
	is generated by
	\[
	\varphi ::= p\mid 0\mid 1\mid
	\varphi\wedge\varphi\mid
	\varphi\vee\varphi\mid
	\varphi\to\varphi\mid T_\ell(\varphi),
	\]
	where \(p\in \mathsf{Prop}\) and \(\ell\in\mathcal L\). The connective
	\(T_\ell\) is an exact truth-value test. In the intended interpretation,
	\[
	T_\ell(a)=
	\begin{cases}
		1, & a=\ell,\\
		0, & a\neq \ell.
	\end{cases}
	\]
	Thus \(T_\ell(\varphi)\) expresses that the truth value of
	\(\varphi\) is exactly \(\ell\). For \(\ell\in\mathcal L\), we also use the derived operation
	\(U_\ell\), defined by
	\[
	U_\ell(a)=\bigvee\{T_{\ell'}(a):\ell\leq \ell',\ \ell'\in\mathcal L\}.
	\]
	Thus \(U_\ell(a)\) expresses that the truth value of \(a\) is at least
	\(\ell\).
\begin{defn}[\cite{maruyama2009algebraic}]
	\label{LVL}
An $\mathcal L\text{-}\mathcal{VL}$-algebra is an algebra
\[
\mathcal A
=
\bigl(A,\wedge,\vee,\to,0,1,(T_{\ell})_{\ell\in\mathcal L}\bigr)
\]
such that $(A,\wedge,\vee,\to,0,1)$ is a Heyting algebra and, for all
$a,b\in A$ and $\ell,\ell_1,\ell_2\in\mathcal L$, the following
conditions hold:
\begin{align*}
	&T_{\ell_1}(a)\wedge T_{\ell_2}(b)
	\leq
	T_{\ell_1\to\ell_2}(a\to b)
	\wedge T_{\ell_1\wedge\ell_2}(a\wedge b)
	\wedge T_{\ell_1\vee\ell_2}(a\vee b),\\
	&T_{\ell_2}(a)
	\leq
	T_{T_{\ell_1}(\ell_2)}
	\bigl(T_{\ell_1}(a)\bigr);
	\tag{L1}\\[1mm]
	&T_0(0)=1,\qquad
	T_{\ell}(0)=0\quad(\ell\neq0),\\
	&T_1(1)=1,\qquad
	T_{\ell}(1)=0\quad(\ell\neq1);
	\tag{L2}\\[1mm]
	&\bigvee_{\ell\in\mathcal L}T_{\ell}(a)=1,\\
	&T_{\ell}(a)\vee\bigl(T_{\ell}(a)\to0\bigr)=1,\\
	&T_{\ell_1}(a)\wedge T_{\ell_2}(a)=0
	\quad(\ell_1\neq\ell_2);
	\tag{L3}\\[1mm]
	&T_1\bigl(T_{\ell}(a)\bigr)=T_{\ell}(a),\\
	&T_0\bigl(T_{\ell}(a)\bigr)=T_{\ell}(a)\to0,\\
	&T_{\ell_2}\bigl(T_{\ell_1}(a)\bigr)=0
	\quad(\ell_2\notin\{0,1\});
	\tag{L4}\\[1mm]
	&T_1(a)\leq a,\qquad
	T_1(a\wedge b)=T_1(a)\wedge T_1(b);
	\tag{L5}\\[1mm]
	&\bigwedge_{\ell\in\mathcal L}
	\bigl(T_{\ell}(a)\leftrightarrow T_{\ell}(b)\bigr)
	\leq
	a\leftrightarrow b.
	\tag{L6}
\end{align*}
In \textup{(L1)}, the index $T_{\ell_1}(\ell_2)$ is computed in the
fixed algebra $\mathcal L$.
\end{defn}
A homomorphism of
\(\mathcal L\)-\(\mathcal{VL}\)-algebras is a map preserving
\(\wedge,\vee,\to,0,1\), and all operations \((T_\ell)_{\ell\in\mathcal L}\). We denote by $\mathcal{VA}_{\mathcal{L}}$ the category of $\mathcal{L}$-$\mathcal{VL}$-algebras and $\mathcal{L}$-$\mathcal{VL}$-algebras-homomorphisms.\\
The modal language \(\mathcal L\)-\(\mathcal{ML}\) is obtained from
\(\mathcal L\)-\(\mathcal{VL}\) by adding one unary modal operator
\(\Box\). No primitive possibility operator $\Diamond$ is assumed.\\
For a formula $\varphi$, let $\operatorname{Var}(\varphi)$ denote the
set of propositional variables occurring in $\varphi$, and let
$\operatorname{Sub}(\varphi)$ denote its set of subformulas. We write
\[
\operatorname{Sub}_{\Box}(\varphi)
=
\{\Box\psi\in\operatorname{Sub}(\varphi)\}
\]
and
\[
m_{\Box}(\varphi)
=
|\operatorname{Sub}_{\Box}(\varphi)|.
\]

The modal depth $\operatorname{md}(\varphi)$ is defined recursively by
\[
\begin{aligned}
	\operatorname{md}(p)
	&=
	\operatorname{md}(0)
	=
	\operatorname{md}(1)
	=
	0,\\
	\operatorname{md}(\varphi\circ\psi)
	&=
	\max\{\operatorname{md}(\varphi),
	\operatorname{md}(\psi)\},
	\qquad
	\circ\in\{\wedge,\vee,\to\},\\
	\operatorname{md}(T_{\ell}(\varphi))
	&=
	\operatorname{md}(\varphi),\\
	\operatorname{md}(\Box\varphi)
	&=
	\operatorname{md}(\varphi)+1.
\end{aligned}
\]
\begin{defn}[\cite{maruyama2009algebraic}]
	\label{LML}
	A $\mathcal{L}$-$\mathcal{ML}$-algebra is a $\mathcal{L}$-$\mathcal{VL}$-algebra $\mathcal A$ equipped with a unary connective $\Box: \mathcal{A}\to\mathcal{A}$ such that, for all \(a,b\in A\) and \(\ell\in\mathcal L\),
	\begin{enumerate}[(i)]
		\item $\Box(a\wedge b)=\Box a\wedge\Box b$;
		\item $\Box\bigl( U_{\ell}(a)\bigr)=U_{\ell}(\Box a)$, $\forall \ell\in\mathcal{L}$.
	\end{enumerate}
\end{defn}
A homomorphism of $\mathcal{L}$-$\mathcal{ML}$-algebras is a mapping that preserves all the operations of $\mathcal{L}$-$\mathcal{VL}$-algebras and the modal operation $\Box$. Let $\mathcal{MA}_{\mathcal{L}}$ denote the
category of $\mathcal{L}$-$\mathcal{ML}$-algebras and homomorphisms of $\mathcal{L}$-$\mathcal{ML}$-algebras.
\subsection{Finite \(\mathcal L\)-valued Kripke models}
Let $W$ be a non-empty set and let
\[
\mathcal R\subseteq W\times W
\]
be a binary relation. For $x\in W$, put
\[
\mathcal R[x]=\{y\in W:x\mathcal R y\}.
\]

An $\mathcal L$-valued Kripke model is a triple
\[
M=(W,\mathcal R,V),
\]
where
\[
V:\mathsf{Prop}\longrightarrow\mathcal L^W
\]
assigns an $\mathcal L$-valued function to every propositional
variable. The model is called finite when $W$ is finite.

The valuation extends recursively to a map
\[
\llbracket-\rrbracket_M:
\operatorname{Form}_{\mathcal L\text{-}\mathcal{ML}}
\longrightarrow
\mathcal L^W.
\]
The non-modal operations are interpreted pointwise. For the modal
operator,
\[
\llbracket\Box\varphi\rrbracket_M(x)
=
\bigwedge
\{\llbracket\varphi\rrbracket_M(y):
y\in\mathcal R[x]\}.
\]
Since $\mathcal L$ is finite, it is a complete lattice. Hence the meet
in the modal clause exists even when $\mathcal R[x]$ is infinite.
As usual,
\[
\bigwedge\varnothing=1.
\]
Equivalently, for every $f\in\mathcal L^W$, define
\[
(\Box_{\mathcal R}f)(x)
=
\bigwedge\{f(y):y\in\mathcal R[x]\}.
\]

\begin{lem}\label{lemma:function-algebra}
	With the pointwise $\mathcal L\text{-}\mathcal{VL}$-operations and
	$\Box_{\mathcal R}$, the algebra $\mathcal L^W$ is an
	$\mathcal L\text{-}\mathcal{ML}$-algebra.
\end{lem}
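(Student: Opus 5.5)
The plan has two stages: first show that $\mathcal L^W$ is an $\mathcal L$-$\mathcal{VL}$-algebra, then check the two modal axioms of Definition \ref{LML} for $\Box_{\mathcal R}$.

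\textbf{The non-modal part.} All operations $\wedge,\vee,\to,0,1,T_\ell$ on $\mathcal L^W$ are defined pointwise. Each axiom in Definition \ref{LVL} is an equation, or an inequality $s\le t$, which can be rewritten as the equation $s\wedge t=s$. So $\mathcal L^W$ is the direct power of $\mathcal L$, and equational conditions are preserved under products. It therefore suffices to verify (L1)--(L6) in $\mathcal L$ itself, with $T_\ell$ the exact test function. This is a finite case check:
\begin{itemize}
\item For (L1), both sides are $0$ unless $a=\ell_1$ and $b=\ell_2$, and in that case the right-hand side is $1$. The second clause is handled the same way, with $T_{\ell_1}(\ell_2)$ computed in $\mathcal L$.
\item For (L2)--(L5), use that $T_\ell(a)\in\{0,1\}$ and that exactly one $\ell$ satisfies $T_\ell(a)=1$.
\item For (L6), the left-hand side is $1$ if $a=b$ and $0$ otherwise. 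When $a=b$ the right-hand side is also $1$.
\end{itemize}

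\textbf{Axiom (i).} Fix $x$. We have
\[(\Box_{\mathcal R}(f\wedge g))(x)=\bigwedge_{y\in\mathcal R[x]}(f(y)\wedge g(y)),\]
and by associativity and commutativity of meets in the complete lattice $\mathcal L$ this equals $\bigwedge f(y)\wedge\bigwedge g(y)$. The empty case gives $1=1\wedge 1$.

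\textbf{Axiom (ii).} Pointwise, $U_\ell(a)=1$ if $a\ge \ell$ and $0$ otherwise. This follows since the $T_{\ell'}(a)$ are $\{0,1\}$-valued and exactly one of them equals $1$. We then compute both sides at $x$:
\begin{itemize}
\item $(\Box_{\mathcal R}U_\ell f)(x)=\bigwedge_{y\in\mathcal R[x]}U_\ell(f(y))$. This is $1$ precisely when $f(y)\ge\ell$ for all $y\in\mathcal R[x]$, and $0$ otherwise.
\item $U_\ell((\Box_{\mathcal R}f)(x))=1$ precisely when $\bigwedge_{y\in\mathcal R[x]}f(y)\ge\ell$.
\end{itemize}
These conditions coincide by the universal property of the meet: $\bigwedge S\ge \ell$ iff $s\ge\ell$ for all $s\in S$. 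The empty case is consistent, since $\bigwedge\varnothing=1\ge\ell$.

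The step needing most care is the reduction of the $\mathcal L$-$\mathcal{VL}$ axioms to the single algebra $\mathcal L$, in particular (L1) with its nested index $T_{\ell_1}(\ell_2)$ and (L4). Rather than grinding through these cases, I would cite Maruyama \cite{maruyama2009algebraic}, where $\mathcal L$ is the standard $\mathcal L$-$\mathcal{VL}$-algebra, and then invoke closure under direct powers.
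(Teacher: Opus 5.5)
Your proposal is correct. It is the routine verification this lemma calls for: the Heyting-algebra laws and (L1)--(L6) are all equations (after rewriting $s\le t$ as $s\wedge t=s$), so they pass from $\mathcal L$ to the direct power $\mathcal L^W$, and the modal axioms then follow from the pointwise description $U_\ell(a)=1$ iff $\ell\le a$. The paper states this lemma without proof, treating it as standard (following Maruyama); its only comparable reasoning is the remark in the proof of Proposition~\ref{prop:F-well-defined} that the two modal identities ``follow directly from the definition,'' and that is the same pointwise check you carry out.
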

Thus the interpretation map
\[
\llbracket-\rrbracket_M
\]
is the unique $\mathcal L\text{-}\mathcal{ML}$-algebra homomorphism extending
the atomic valuation $V$.\\
For a state $x\in W$, we write
\[
M,x\models\varphi
\]
when
\[
\llbracket\varphi\rrbracket_M(x)=1.
\]
We write
\[
M\models\varphi
\]
when $M,x\models\varphi$ for every $x\in W$. If
\[
\llbracket\varphi\rrbracket_M(x)\neq1,
\]
then the pointed model $(M,x)$ is a counterexample to $\varphi$. Its
exact failure value is
\[
\llbracket\varphi\rrbracket_M(x).
\]
\subsection{Bitopological background and the non-modal duality}
A bitopological space is a triple
\[
(X,\tau_1,\tau_2),
\]
where \(\tau_1\) and \(\tau_2\) are topologies on \(X\). Let
\(\delta_i\) be the collection of \(\tau_i\)-closed sets, and put
\[
\beta_1=\tau_1\cap\delta_2,
\qquad
\beta_2=\tau_2\cap\delta_1.
\]
\begin{defn}[\cite{salbany1974bitopological}]
	\label{BT}
	\begin{enumerate}[(i)]
		\item A bitopological space $(X,\tau_1,\tau_2)$ is said to be pairwise Hausdorff space if for every pair $(x,y)$ of distinct points $x,y\in X$ there exist disjoint open sets $U_x\in\tau_1$ and $U_y\in\tau_2$ containing $x$ and $y$, respectively.
		\item A bitopological space $(X,\tau_1,\tau_2)$ is said to be pairwise zero-dimensional if $\beta_1$ is a basis for $\tau_1$ and $\beta_2$ is a basis for $\tau_2$.
		\item A bitopological space $(X,\tau_1,\tau_2)$ is said to be pairwise compact if the topological space $(X,\tau)$, where $\tau=\tau_1\vee\tau_2$, is compact.
	\end{enumerate}
\end{defn}
\begin{lem}\label{lem:finite-pairwise-discrete}
	Let $(X,\tau_1,\tau_2)$ be a finite pairwise Hausdorff
	bitopological space. Then both $\tau_1$ and $\tau_2$ are discrete.
\end{lem}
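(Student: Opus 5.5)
The plan is to show that every singleton is open in each topology; then both $\tau_1$ and $\tau_2$ are discrete. Fix $x\in X$ and work with $\tau_1$. For every $y\in X$ with $y\neq x$, the pair $(x,y)$ consists of distinct points. The pairwise Hausdorff condition in Definition~\ref{BT}(i) then gives disjoint sets $U_{x,y}\in\tau_1$ and $U'_{y}\in\tau_2$ with $x\in U_{x,y}$ and $y\in U'_y$. In particular $y\notin U_{x,y}$. Since $X$ is finite, the intersection
\[
U=\bigcap_{y\in X\setminus\{x\}} U_{x,y}
\]
is a finite intersection of $\tau_1$-open sets. Hence $U\in\tau_1$. (If $X=\{x\}$, the empty intersection is $X=\{x\}$, which is open anyway.) By construction $x\in U$ and no $y\neq x$ lies in $U$, so $U=\{x\}$. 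Thus every singleton is $\tau_1$-open. Because every subset of $X$ is a union of singletons, $\tau_1$ is discrete.

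For $\tau_2$ I will use the same argument with the roles swapped. The definition is stated for every pair of distinct points, so it applies to the ordered pair $(y,x)$ as well as to $(x,y)$. Applying it to $(y,x)$ gives disjoint sets $V_y\in\tau_1$ and $V_{x,y}\in\tau_2$ with $y\in V_y$ and $x\in V_{x,y}$. Then $y\notin V_{x,y}$. Intersecting the finitely many sets $V_{x,y}$ over $y\neq x$ gives $\{x\}\in\tau_2$, so $\tau_2$ is discrete as well.

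The only point needing care is this use of the ordered pair $(y,x)$, because the Hausdorff condition attaches $\tau_1$ to the first point and $\tau_2$ to the second. Since Definition~\ref{BT}(i) quantifies over all pairs of distinct points, both orders are available and both topologies are covered. Finiteness is used exactly once, to guarantee that the intersections above are finite and therefore open.
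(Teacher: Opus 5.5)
The paper states this lemma without proof, so there is no argument of its own to compare against. Your proof is correct and complete, and it is the standard one. Applying the pairwise Hausdorff condition to both ordered pairs $(x,y)$ and $(y,x)$, which Definition~\ref{BT}(i) permits, shows that each of $\tau_1$ and $\tau_2$ is $T_1$. Finiteness then makes each singleton $\{x\}$ a finite intersection of open sets, hence open.
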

A \emph{pairwise Boolean space} is a pairwise Hausdorff, pairwise
zero-dimensional, and pairwise compact bitopological space. A map
\[
f:(X,\tau_1,\tau_2)\longrightarrow(Y,\sigma_1,\sigma_2)
\]
is pairwise continuous if it is continuous from $(X,\tau_i)$ to
$(Y,\sigma_i)$ for $i=1,2$. Pairwise Boolean spaces and pairwise
continuous maps form a category denoted by $\mathrm{PBS}$.\\
We now recall the category $\mathrm{PBS}_{\mathcal L}$ introduced in
\cite{das2021bitopological}. Let $\mathfrak S_{\mathcal L}$ be the set
of subalgebras of $\mathcal L$, and let $\Lambda_B$ denote the set of
subspaces of a pairwise Boolean space $B$ that are closed in both
topologies.

An object in $\mathrm{PBS}_{\mathcal L}$ is a pair
\[
(B,\alpha_B),
\]
where $B$ is a pairwise Boolean space and
\[
\alpha_B:\mathfrak S_{\mathcal L}\longrightarrow\Lambda_B
\]
satisfies
\[
\alpha_B(\mathcal L)=B
\]
and
\[
\alpha_B(\mathcal L_1\cap\mathcal L_2)
=
\alpha_B(\mathcal L_1)\cap\alpha_B(\mathcal L_2)
\]
for all $\mathcal L_1,\mathcal L_2\in\mathfrak S_{\mathcal L}$.

A morphism
\[
f:(B,\alpha_B)\longrightarrow(C,\alpha_C)
\]
is a pairwise continuous map such that
\[
f[\alpha_B(\mathcal L')]
\subseteq
\alpha_C(\mathcal L')
\]
for every $\mathcal L'\in\mathfrak S_{\mathcal L}$.

Equip $\mathcal L$ with the discrete topology in both coordinates and
define
\[
\alpha_{\mathcal L}(\mathcal L')=\mathcal L'
\qquad
(\mathcal L'\in\mathfrak S_{\mathcal L}).
\]
Then $(\mathcal L,\alpha_{\mathcal L})$ is an object in
$\mathrm{PBS}_{\mathcal L}$.

For an $\mathcal L\text{-}\mathcal{VL}$-algebra $\mathcal A$, put
\[
X_{\mathcal A}
=
\operatorname{Hom}_{\mathcal{VA}_{\mathcal L}}
(\mathcal A,\mathcal L).
\]
For $a\in A$, define
\[
\langle a\rangle
=
\{h\in X_{\mathcal A}:h(a)=1\}.
\]
Let $\tau_1^{\mathcal A}$ be the topology generated by the sets
$\langle a\rangle$, and let $\tau_2^{\mathcal A}$ be the topology
generated by their complements. Define
\[
\alpha_{\mathcal A}(\mathcal L')
=
\operatorname{Hom}_{\mathcal{VA}_{\mathcal L}}
(\mathcal A,\mathcal L')
\qquad
(\mathcal L'\in\mathfrak S_{\mathcal L}).
\]

The assignments
\[
F(B,\alpha_B)
=
\operatorname{Hom}_{\mathrm{PBS}_{\mathcal L}}
\bigl((B,\alpha_B),(\mathcal L,\alpha_{\mathcal L})\bigr)
\]
and
\[
G(\mathcal A)
=
\bigl(
X_{\mathcal A},
\tau_1^{\mathcal A},
\tau_2^{\mathcal A},
\alpha_{\mathcal A}
\bigr)
\]
extend contravariantly to morphisms by precomposition. They yield the
known dual equivalence
\[
\mathcal{VA}_{\mathcal L}
\simeq
\mathrm{PBS}_{\mathcal L}^{\mathrm{op}}.
\]

The corresponding evaluation maps are
\[
\gamma_{\mathcal A}:
\mathcal A\longrightarrow F(G(\mathcal A)),
\qquad
\gamma_{\mathcal A}(a)(h)=h(a),
\]
and
\[
\zeta_{(B,\alpha_B)}:
B\longrightarrow G(F(B,\alpha_B)),
\qquad
\zeta_{(B,\alpha_B)}(b)(f)=f(b).
\]
These maps are the components of the natural isomorphisms establishing
the dual equivalence. 
\section{Relational Bitopological Representation}\label{3}
This section develops the relational form of the bitopological
representation used later. We work within the natural duality
framework of Clark and Davey~\cite{clark1998natural}, together with
the modal adaptations for $\mathcal L$-valued modal algebras
developed in~\cite{maruyama2011dualities,maruyama2012natural}. The underlying
non-modal bitopological duality is taken from~\cite{das2021bitopological}.
The passage to the modal setting is obtained by equipping the dual
space with a canonical binary relation representing the operation
$\Box$.\\
Accordingly, the representation theorem established in this section is not
presented as a separate general duality contribution. Its role is
to provide, in a self-contained form, the canonical relation,
evaluation maps, and modal compatibility properties required for
the finite-reduction construction of Section~\ref{4}. The main results
of the paper are developed in Sections~\ref{4} and~\ref{5}.


Let $\mathcal R$ be a binary relation on a set $X$. For
$C\subseteq X$, put
\[
[\mathcal R]C
=
\{x\in X:\mathcal R[x]\subseteq C\}
\]
and
\[
\langle\mathcal R\rangle C
=
\{x\in X:\mathcal R[x]\cap C\neq\varnothing\},
\]
where
\[
\mathcal R[x]=\{y\in X:x\mathcal R y\}.
\]

\subsection{Relational $\mathcal L$-valued pairwise Boolean spaces}
Let $(X,\tau_1,\tau_2,\alpha_X)$ be an object of
$\mathrm{PBS}_{\mathcal L}$, and write
\[
\beta_1^X=\tau_1\cap\delta_2,
\qquad
\beta_2^X=\tau_2\cap\delta_1.
\]
\begin{defn}\label{def:prbsl}
	The category $\mathrm{PRBS}_{\mathcal L}$ is defined as follows.
	
	An object of $\mathrm{PRBS}_{\mathcal L}$ is a structure
	\[
	\mathfrak X
	=
	(X,\tau_1,\tau_2,\alpha_X,\mathcal R_X)
	\]
	such that $(X,\tau_1,\tau_2,\alpha_X)$ is an object of
	$\mathrm{PBS}_{\mathcal L}$ and $\mathcal R_X$ is a binary relation
	on $X$ satisfying:
	
	\begin{enumerate}
		\item[\textup{(R1)}]
		For every $x\in X$, the subspace $\mathcal R_X[x]$ is pairwise compact.

\item[\textup{(R2)}]
For every $C\in\beta_1^X$,
\[
[\mathcal R_X]C\in\beta_1^X
\qquad\text{and}\qquad
\langle\mathcal R_X\rangle C\in\beta_1^X.
\]

\item[\textup{(R3)}]
For every $\mathcal L'\in\mathfrak S_{\mathcal L}$ and every
$x\in\alpha_X(\mathcal L')$,
\[
\mathcal R_X[x]\subseteq\alpha_X(\mathcal L').
\]
\end{enumerate}
A morphism
\[
f:\mathfrak X\longrightarrow\mathfrak Y=(Y,\sigma_1,\sigma_2,\alpha_Y,\mathcal R_Y)
\]
in $\mathrm{PRBS}_{\mathcal L}$ is a
$\mathrm{PBS}_{\mathcal L}$-morphism between the underlying spaces
such that:
\begin{enumerate}
	\item[\textup{(Forth)}]
	if $x\mathcal R_Xx'$, then
	\[
	f(x)\mathcal R_Yf(x');
	\]
	
	\item[\textup{(Back)}]
	if $f(x)\mathcal R_Yy$, then there exists $x'\in X$ such that
\[
x\mathcal R_Xx'
\qquad\text{and}\qquad
f(x')=y.
\]
\end{enumerate}
\end{defn}
Since
\[
[\mathcal R](X\setminus C)
=
X\setminus\langle\mathcal R\rangle C
\]
and
\[
\langle\mathcal R\rangle(X\setminus C)
=
X\setminus[\mathcal R]C,
\]
condition \textup{(R2)} also gives
\[
[\mathcal R]D,\,
\langle\mathcal R\rangle D
\in\beta_2^X
\qquad
(D\in\beta_2^X).
\]
\subsection{The canonical relational space}
Let
\[
\mathcal A=(A,\Box)
\]
be an $\mathcal L$-$\mathcal{ML}$-algebra. Its underlying
bitopological dual is
\[
X_{\mathcal A}
=
\operatorname{Hom}_{\mathcal{VA}_{\mathcal L}}
(\mathcal A,\mathcal L),
\]
equipped with the topologies
$\tau_1^{\mathcal A},\tau_2^{\mathcal A}$ and the structure map
$\alpha_{\mathcal A}$ defined in Section \ref{2}.\\
Define a binary relation
$\mathcal R_{\mathcal A}$ on $X_{\mathcal A}$ by
\[
h\mathcal R_{\mathcal A}k
\]
if and only if, for every $a\in A$ and every
$\ell\in\mathcal L$,
\[
h(\Box a)\geq\ell
\quad\Longrightarrow\quad
k(a)\geq\ell.
\]
Equivalently,
\[
h\mathcal R_{\mathcal A}k
\quad\Longleftrightarrow\quad
h(\Box a)\leq k(a)
\quad\text{for every }a\in A.
\]
The following canonical relation identity is standard for
$\mathcal L$-$\mathcal{ML}$-algebras
\cite{maruyama2011dualities}.
\begin{lem}\label{lem:canonical-relation}
	For every $h\in X_{\mathcal A}$ and every $a\in A$,
	\[
	h(\Box a)
	=
	\bigwedge
	\{k(a):h\mathcal R_{\mathcal A}k\}.
	\]
	Here, as usual,
	\[
	\bigwedge\varnothing=1.
	\]
\end{lem}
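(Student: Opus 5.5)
The plan is to prove the two inequalities separately, working one truth-value threshold at a time through the operations $U_\ell$. The inequality
\[
h(\Box a)\leq\bigwedge\{k(a):h\mathcal R_{\mathcal A}k\}
\]
is immediate from the equivalent form of the definition of $\mathcal R_{\mathcal A}$: every successor $k$ satisfies $h(\Box a)\leq k(a)$. This also covers the empty case, where the right-hand side is $1$.

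For the converse, I would show that the sets
\[
\{\ell\in\mathcal L:h(\Box a)\geq\ell\}
\qquad\text{and}\qquad
\{\ell\in\mathcal L:k(a)\geq\ell\text{ for all }k\text{ with }h\mathcal R_{\mathcal A}k\}
\]
coincide. The largest element of the first set is $h(\Box a)$, and the largest element of the second set is the meet. So it suffices to prove the following claim: if $h(\Box a)\not\geq\ell$, then some $k$ with $h\mathcal R_{\mathcal A}k$ satisfies $k(a)\not\geq\ell$, i.e. $k(U_\ell(a))=0$.

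Two preliminary translations will be used throughout. First, for a homomorphism $g$ we have $g(U_\ell(b))\in\{0,1\}$, with $g(U_\ell(b))=1$ iff $g(b)\geq\ell$. Second, by Definition~\ref{LML}(ii), $h(\Box U_\ell(b))=h(U_\ell(\Box b))$. Consequently
\[
h\mathcal R_{\mathcal A}k
\iff
\bigl(\text{for all }c,\ell':\ h(\Box U_{\ell'}(c))=1\Rightarrow k(U_{\ell'}(c))=1\bigr).
\]

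To prove the claim, let $B(\mathcal A)=\{b\in A:T_1(b)=b\}$. By (L3)–(L5) this is the Boolean algebra of complemented elements, and it contains every $T_{\ell'}(c)$ and every $U_{\ell'}(c)$. Put
\[
F=\{b\in B(\mathcal A):h(\Box b)=1\}.
\]
Since $\Box$ preserves binary meets and $h$ is a homomorphism, $F$ is closed under meets and upward closed within $B(\mathcal A)$. I would also check that $1\in F$: here $\Box 1=\Box U_1(1)=U_1(\Box 1)$ is Boolean, and it should be forced to be $1$. If this fails the argument must be adjusted, but the only use of $F$ is as a proper filter avoiding a given element. By hypothesis, $h(\Box U_\ell(a))=h(U_\ell(\Box a))=0$, so $U_\ell(a)\notin F$. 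The Boolean prime filter theorem then gives an ultrafilter $u\supseteq F$ of $B(\mathcal A)$ with $U_\ell(a)\notin u$.

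Next I define $k:A\to\mathcal L$ by letting $k(c)$ be the unique $\ell'$ with $T_{\ell'}(c)\in u$. Existence and uniqueness of $\ell'$ come from (L3): the join of all $T_{\ell'}(c)$ is $1$, they are pairwise disjoint, and $u$ is an ultrafilter. Then:
\begin{enumerate}[(i)]
\item if $h(\Box c)\geq\ell'$, then $U_{\ell'}(c)\in F\subseteq u$, so $k(c)\geq\ell'$, giving $h\mathcal R_{\mathcal A}k$;
\item $U_\ell(a)\notin u$ gives $k(a)\not\geq\ell$.
\end{enumerate}

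The main obstacle is verifying that this $k$ is an $\mathcal L$-$\mathcal{VL}$-homomorphism. This is exactly the correspondence between ultrafilters of the Boolean skeleton and points of $X_{\mathcal A}$ that underlies the duality of \cite{das2021bitopological} and Maruyama's work, and I would invoke it. If a direct check is needed, I would do it as follows. Preservation of $\wedge,\vee,\to$ follows from the first axiom in (L1): $T_{\ell_1}(c)\wedge T_{\ell_2}(d)\in u$ forces $T_{\ell_1\circ\ell_2}(c\circ d)\in u$. Preservation of $T_{\ell'}$ uses (L4) and the second axiom in (L1). Preservation of the constants uses (L2).
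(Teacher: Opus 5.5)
Your proof is correct. The paper gives no argument for this lemma; it only calls the identity standard and cites \cite{maruyama2011dualities}. So your proposal supplies what the paper omits, and it follows essentially the standard route behind that citation:
\begin{enumerate}[(1)]
\item Definition~\ref{LML}(ii) lets $\Box$ restrict to the Boolean skeleton $B(\mathcal A)=\{b:T_1(b)=b\}$.
\item Your threshold translation shows that $h\mathcal R_{\mathcal A}k$ holds iff $\{b\in B(\mathcal A):h(\Box b)=1\}\subseteq\{b:k(b)=1\}$.
\item The J\'onsson--Tarski existence lemma then produces the required successor ultrafilter.
\item (L1)--(L3) convert that ultrafilter back into a point of $X_{\mathcal A}$.
\end{enumerate}
Your version is self-contained. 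It also makes visible that axiom (ii) is exactly what reduces the $\mathcal L$-valued statement to the two-valued one. Only the Boolean prime filter theorem is needed, and that is trivial for the finite algebras used in Section~\ref{4}.

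A few loose ends should be closed.

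\textbf{The unit is in $F$.} No adjustment is needed. By (L3), $U_0(c)=\bigvee_{\ell'\in\mathcal L}T_{\ell'}(c)=1$, so Definition~\ref{LML}(ii) with $\ell=0$ gives $\Box 1=\Box U_0(c)=U_0(\Box c)=1$. Your own translation shows the same thing: $h(\Box U_0(c))=U_0(h(\Box c))=1$. Your suggested route via $U_1$ shows only that $\Box 1$ is Boolean, not that it equals $1$.

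\textbf{The set $B(\mathcal A)$.} It is not in general the set of all complemented elements. If $\mathcal L$ is the four-element Boolean algebra, $\mathcal L$ itself has complemented elements outside $B(\mathcal L)=\{0,1\}$. All you actually use is that $B(\mathcal A)$ is a Boolean sublattice containing every $T_{\ell'}(c)$ and every $U_{\ell'}(c)$. Closure under joins needs (L1) in addition to (L3)--(L5):
\begin{enumerate}[(a)]
\item For $x,y\in B(\mathcal A)$, (L4) gives $T_0(x)=\neg x$ and $T_0(y)=\neg y$.
\item Then (L1) yields $x\wedge y\le T_1(x\vee y)$, $x\wedge\neg y\le T_1(x\vee y)$ and $\neg x\wedge y\le T_1(x\vee y)$.
\item Since $x\vee\neg x=y\vee\neg y=1$ by (L3), distributivity gives $x\vee y\le T_1(x\vee y)$.
\item (L5) gives the reverse inequality, so $T_1(x\vee y)=x\vee y$.
\end{enumerate}

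\textbf{Preservation of $T_{\ell'}$.} In your direct homomorphism check, this needs only the second inequality of (L1). Since $T_{k(c)}(c)\in u$ and $T_{k(c)}(c)\le T_{T_{\ell'}(k(c))}\bigl(T_{\ell'}(c)\bigr)$, you get $k(T_{\ell'}(c))=T_{\ell'}(k(c))$.
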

Define
\[
G(\mathcal A)
=
\bigl(
X_{\mathcal A},
\tau_1^{\mathcal A},
\tau_2^{\mathcal A},
\alpha_{\mathcal A},
\mathcal R_{\mathcal A}
\bigr).
\]
For an $\mathcal L$-$\mathcal{ML}$-algebra homomorphism
\[
\psi:\mathcal A\longrightarrow\mathcal B,
\]
define
\[
G(\psi):G(\mathcal B)\longrightarrow G(\mathcal A)
\]
by
\[
G(\psi)(h)=h\circ\psi.
\]
\begin{prop}\label{prop:G-well-defined}
	The above assignments define a contravariant functor
	\[
	G:\mathcal{MA}_{\mathcal L}
	\longrightarrow
	\mathrm{PRBS}_{\mathcal L}.
	\]
	\begin{proof}
		By the non-modal duality recalled in Section~\ref{2},
		\[
		(X_{\mathcal A},
		\tau_1^{\mathcal A},
		\tau_2^{\mathcal A},
		\alpha_{\mathcal A},
		\mathcal R_{\mathcal A})
		\]
		is an object of $\mathrm{PBS}_{\mathcal L}$. It remains to verify
		conditions {\rm (R1)--(R3)} for the canonical relation
		$\mathcal R_{\mathcal A}$.
		
		For {\rm (R1)}, fix $h\in X_{\mathcal A}$. If
		$k\notin\mathcal R_{\mathcal A}[h]$, then for some $a\in A$,
		\[
		h(\Box a)\nleq k(a).
		\]
		Putting $\ell=h(\Box a)$, the set
		\[
		X_A\setminus\langle U_\ell(a)\rangle
		\]
		is a $\tau_2^{\mathcal A}$-open neighbourhood of $k$ disjoint from
		$\mathcal R_{\mathcal A}[h]$. Hence $\mathcal R_{\mathcal A}[h]$ is
		$\tau_2^{\mathcal A}$-closed and therefore pairwise compact.\\
		For {\rm (R2)}, every member of $\beta_1^{\mathcal A}$ is of the form
		$\langle a\rangle$ for some $a\in A$. By Lemma~\ref{lem:canonical-relation},
		\[
		[\mathcal R_{\mathcal A}]\langle a\rangle
		=\langle\Box a\rangle.
		\]
		Moreover, if $b=T_1(a)\to 0$, then
		\[
		\langle\mathcal R_{\mathcal A}\rangle\langle a\rangle
		=\left\langle T_1(\Box b)\to 0\right\rangle.
		\]
		Thus both sets belong to $\beta_1^{\mathcal A}$.
		
		For {\rm (R3)}, let
		$h\in\alpha_{\mathcal A}(\mathcal L')$ and $h\mathcal R_{\mathcal A}k$.
		If $k(a)=\ell\notin\mathcal L'$ for some $a\in A$, put
		$c=T_\ell(a)\to a$. Lemma~\ref{lem:canonical-relation} gives
		\[
		h(\Box c)=\ell,
		\]
		contradicting $h\in\alpha_A(\mathcal L')$. Hence
		\[
		\mathcal R_{\mathcal A}[h]\subseteq\alpha_{\mathcal A}(\mathcal L').
		\]
		Therefore $G(\mathcal A)$ is an object in $\mathrm{PRBS}_{\mathcal L}$.\\
		Now let $\psi:\mathcal A\to \mathcal B$ be an $\mathcal L$-$\mathcal{ML}$-algebra homomorphism.
		The non-modal duality shows that
		\[
		G(\psi):X_{\mathcal B}\to X_{\mathcal A},\qquad G(\psi)(h)=h\circ\psi,
		\]
		is a $\mathrm{PBS}_{\mathcal L}$-morphism. If $h\mathcal R_{\mathcal B}k$, then for every $a\in A$,
		\[
		G(\psi)(h)(\Box a)
		=h(\Box\psi(a))
		\leq k(\psi(a))
		=G(\psi)(k)(a),
		\]
		so the forth condition holds. For the back condition, suppose
		\[
		G(\psi)(h)\mathcal R_{\mathcal A}w.
		\]
		The set
		\[
		K=G(\psi)[\mathcal R_{\mathcal B}[h]]
		\]
		is pairwise compact. If $w\notin K$, pairwise Hausdorffness and
		compactness yield an element $c\in A$ such that
		\[
		z(c)=1\quad(z\in K),
		\qquad
		w(c)=0.
		\]
		Hence
		\[
		h(\Box\psi(c))=1
		\]
		by Lemma~\ref{lem:canonical-relation}, and therefore
		\[
		G(\psi)(h)(\Box c)=1.
		\]
		Since $G(\psi)(h)\mathcal R_{\mathcal A}w$, this implies $w(c)=1$, a
		contradiction. Thus $w\in K$, so there exists
		$k\in\mathcal R_{\mathcal B}[h]$ with $G(\psi)(k)=w$. Hence $G(\psi)$ is a morphism in
		$\mathrm{PRBS}_{\mathcal L}$. Identity and composition are
		preserved by precomposition, so $G$ is a contravariant functor.

	\end{proof}
\end{prop}
\subsection{The complex modal algebra}
Let
\[
\mathfrak X
=
(X,\tau_1,\tau_2,\alpha_X,\mathcal R_X)
\]
be an object in $\mathrm{PRBS}_{\mathcal L}$. Put
\[
F(\mathfrak X)
=
\operatorname{Hom}_{\mathrm{PBS}_{\mathcal L}}
\bigl(
(X,\tau_1,\tau_2,\alpha_X),
(\mathcal L,\alpha_{\mathcal L})
\bigr).
\]
The operations of the
$\mathcal L$-$\mathcal{VL}$-reduct are defined pointwise. Define the
modal operation by
\[
(\Box_{\mathcal R_X}f)(x)
=
\bigwedge
\{f(y):x\mathcal R_Xy\}.
\]
\begin{prop}\label{prop:F-well-defined}
	The algebra
	\[
	\bigl(F(\mathfrak X),\Box_{\mathcal R_X}\bigr)
	\]
	is an $\mathcal L$-$\mathcal{ML}$-algebra.
	
	Moreover, if
	\[
	g:\mathfrak X\longrightarrow\mathfrak Y=(Y,\sigma_1,\sigma_2,\alpha_Y,\mathcal R_Y)
	\]
	is a morphism in $\mathrm{PRBS}_{\mathcal L}$, then
\[
F(g):F(\mathfrak Y)\longrightarrow F(\mathfrak X),
\qquad
F(g)(f)=f\circ g,
\]
is an $\mathcal L$-$\mathcal{ML}$-algebra homomorphism.
Consequently,
\[
F:\mathrm{PRBS}_{\mathcal L}
\longrightarrow
\mathcal{MA}_{\mathcal L}
\]
is a contravariant functor.
\begin{proof}
	Let $f\in F(\mathfrak X)$ and, for $\ell\in\mathcal L$, put
	\[
	C_{\ell}(f)
	=
	\{x\in X:\ell\leq f(x)\}.
	\]
	Since
	\[
	C_{\ell}(f)
	=
	\bigl(U_{\ell}\circ f\bigr)^{-1}(\{1\}),
	\]
	the set $C_{\ell}(f)$ belongs to
	$\beta_1^X\cap\beta_2^X$. Furthermore,
	\[
	C_{\ell}(\Box_{\mathcal R_X}f)
	=
	[\mathcal R_X]C_{\ell}(f).
	\]
	Condition \textup{(R2)} therefore shows that the latter set also
	belongs to $\beta_1^X\cap\beta_2^X$. Since $\mathcal L$ is finite,
	each fibre of $\Box_{\mathcal R_X}f$ is a finite Boolean combination
	of these threshold sets. Thus $\Box_{\mathcal R_X}f$ is pairwise
	continuous. If $x\in\alpha_X(\mathcal L')$, then by \textup{(R3)},
	\[
	\mathcal R_X[x]\subseteq\alpha_X(\mathcal L').
	\]
	Since $f[\alpha_X(\mathcal L')]\subseteq\mathcal L'$ and
	$\mathcal L'$ is a finite subalgebra of $\mathcal L$, we obtain
	\[
	(\Box_{\mathcal R_X}f)(x)\in\mathcal L'.
	\]
	Hence $\Box_{\mathcal R_X}f\in F(\mathfrak X)$. The identities
	\[
	\Box_{\mathcal R_X}(f\wedge g)
	=
	\Box_{\mathcal R_X}f
	\wedge
	\Box_{\mathcal R_X}g
	\]
	and
	\[
	\Box_{\mathcal R_X}U_{\ell}(f)
	=
	U_{\ell}(\Box_{\mathcal R_X}f)
	\]
	follow directly from the definition. Finally, the forth and back conditions imply
	\[
	g[\mathcal R_X[x]]
	=
	\mathcal R_Y[g(x)].
	\]
	Therefore,
	\[
	\Box_{\mathcal R_X}(f\circ g)
	=
	(\Box_{\mathcal R_Y}f)\circ g,
	\]
	so precomposition preserves the modal operation.
\end{proof}
\end{prop}
\subsection{Representation theorem}
For an $\mathcal L$-$\mathcal{ML}$-algebra $\mathcal A$, define
\[
\gamma_{\mathcal A}:
\mathcal A\longrightarrow F(G(\mathcal A))
\]
by
\[
\gamma_{\mathcal A}(a)(h)=h(a).
\]

For
\[
\mathfrak X
=
(X,\tau_1,\tau_2,\alpha_X,\mathcal R_X),
\]
define
\[
\zeta_{\mathfrak X}:
\mathfrak X\longrightarrow G(F(\mathfrak X))
\]
by
\[
\zeta_{\mathfrak X}(x)(f)=f(x).
\]
\begin{thm}\label{thm:relational-representation}
	For every $\mathcal L$-$\mathcal{ML}$-algebra $\mathcal A$,
	$\gamma_{\mathcal A}$ is an isomorphism in
	$\mathcal{MA}_{\mathcal L}$. For every
	$\mathfrak X\in\mathrm{PRBS}_{\mathcal L}$,
	$\zeta_{\mathfrak X}$ is an isomorphism in
	$\mathrm{PRBS}_{\mathcal L}$. Consequently,
	\[
	\mathcal{MA}_{\mathcal L}
	\simeq
	\mathrm{PRBS}_{\mathcal L}^{\mathrm{op}}.
	\]
	\begin{proof}
		By the non-modal duality,
		$\gamma_{\mathcal A}$ is already an isomorphism of
		$\mathcal L$-$\mathcal{VL}$-algebras. For $a\in A$ and
		$h\in X_{\mathcal A}$, Lemma \ref{lem:canonical-relation} gives
		\[
		\begin{aligned}
			\bigl(
			\Box_{\mathcal R_{\mathcal A}}
			\gamma_{\mathcal A}(a)
			\bigr)(h)
			&=
			\bigwedge
			\{\gamma_{\mathcal A}(a)(k):
			h\mathcal R_{\mathcal A}k\}\\
			&=
			\bigwedge
			\{k(a):h\mathcal R_{\mathcal A}k\}\\
			&=
			h(\Box a)\\
			&=
			\gamma_{\mathcal A}(\Box a)(h).
		\end{aligned}
		\]
		Thus $\gamma_{\mathcal A}$ preserves $\Box$. Similarly, the non-modal duality shows that
		$\zeta_{\mathfrak X}$ is an isomorphism of the underlying
		$\mathcal L$-valued pairwise Boolean spaces. The definition of the
		canonical relation gives \[
		x\mathcal R_Xy
		\quad\Longrightarrow\quad
		\zeta_{\mathfrak X}(x)
		\mathcal R_{F(\mathfrak X)}
		\zeta_{\mathfrak X}(y).
		\]
		The converse follows from the separation property of the underlying
		bitopological duality and the pairwise compactness of
		$\mathcal R_X[x]$. Hence
		\[
		x\mathcal R_Xy
		\quad\Longleftrightarrow\quad
		\zeta_{\mathfrak X}(x)
		\mathcal R_{F(\mathfrak X)}
		\zeta_{\mathfrak X}(y).
		\]
		Therefore $\zeta_{\mathfrak X}$ and its inverse preserve the
		relations.The naturality of $\gamma$ and $\zeta$ follows immediately from their definitions and the action of the functors on morphisms by
		precomposition. Hence
		\[
		\mathcal{MA}_{\mathcal L}\simeq \mathrm{PRBS}_{\mathcal L}^{\mathrm{op}}.
		\]
	\end{proof}
\end{thm}
\section{Finite truth-value preserving reduction}\label{4}
In this section, let
\[
M=(W,\mathcal R,V)
\]
be a finite $\mathcal L$-valued Kripke model, and let
\[
\Sigma\subseteq\mathsf{Prop}
\]
be a finite propositional vocabulary. We construct a reduced model
that preserves the exact $\mathcal L$-truth value of every formula
over $\Sigma$. The construction is based on the modal algebra of
semantic observations generated by the variables in $\Sigma$. Unlike filtration, the present construction does not begin with a prescribed finite collection of formulas. The original model is
already finite, and the reduction preserves all formulas over the chosen vocabulary.
\subsection{The generated modal observation algebra}

Let
\[
\operatorname{Form}(\Sigma)
\]
denote the set of $\mathcal L$-$\mathcal{ML}$-formulas whose
propositional variables belong to $\Sigma$. The interpretation map
\[
\llbracket-\rrbracket_M:
\operatorname{Form}(\Sigma)\longrightarrow\mathcal L^W
\]
is defined as in Section \ref{2}.

\begin{defn}\label{def:generated-algebra}
	The modal observation algebra generated by $\Sigma$ in $M$ is the
	$\mathcal L$-$\mathcal{ML}$-subalgebra
	\[
	A_M^\Sigma
	=
	\operatorname{Sg}_{\mathcal{MA}_{\mathcal L}}
	\bigl(\{V(p):p\in\Sigma\}\bigr)
	\subseteq\mathcal L^W.
	\]
\end{defn}
Thus $A_M^\Sigma$ is the smallest subset of $\mathcal L^W$ containing
the atomic valuation functions $V(p)$, $p\in\Sigma$, and closed under
the pointwise $\mathcal L$-$\mathcal{VL}$-operations and the modal
operation $\Box_{\mathcal R}$.
\begin{lem}\label{lem:semantic-image}
The generated modal observation algebra is exactly the image of the
semantic interpretation map:
\[
A_M^\Sigma
=
\left\{
\llbracket\varphi\rrbracket_M:
\varphi\in\operatorname{Form}(\Sigma)
\right\}.
\]
	\begin{proof}
		Put
		\[
		B_M^\Sigma
		=
		\left\{
		\llbracket\varphi\rrbracket_M:
		\varphi\in\operatorname{Form}(\Sigma)
		\right\}.
		\]
		We prove that
		\[
		A_M^\Sigma=B_M^\Sigma.
		\]
		First, $B_M^\Sigma$ contains every generator $V(p)$, since
		\[
		V(p)=\llbracket p\rrbracket_M
		\qquad (p\in\Sigma).
		\]
		It also contains the constant functions $0$ and $1$. Moreover, if
		\[
		f=\llbracket\varphi\rrbracket_M
		\quad\text{and}\quad
		g=\llbracket\psi\rrbracket_M,
		\]
		then
		\[
		\begin{aligned}
			f\wedge g
			&=\llbracket\varphi\wedge\psi\rrbracket_M,\\
			f\vee g
			&=\llbracket\varphi\vee\psi\rrbracket_M,\\
			f\to g
			&=\llbracket\varphi\to\psi\rrbracket_M,
		\end{aligned}
		\]
		and, for every $\ell\in\mathcal L$,
		\[
		T_\ell(f)
		=
		\llbracket T_\ell(\varphi)\rrbracket_M.
		\]
		Similarly,
		\[
		\Box_{\mathcal R}f
		=
		\llbracket\Box\varphi\rrbracket_M.
		\]
		Hence $B_M^\Sigma$ is an
		$\mathcal L$-$\mathcal{ML}$-subalgebra of $\mathcal L^W$ containing
		all the generators $V(p)$, $p\in\Sigma$. By the minimality of the
		generated subalgebra,
		\[
		A_M^\Sigma\subseteq B_M^\Sigma.
		\]
		
		Conversely, we prove by structural induction on $\varphi$ that
		\[
		\llbracket\varphi\rrbracket_M\in A_M^\Sigma
		\]
		for every $\varphi\in\operatorname{Form}(\Sigma)$.
		
		If $\varphi=p\in\Sigma$, then
		\[
		\llbracket p\rrbracket_M=V(p)\in A_M^\Sigma.
		\]
		The constant functions $0$ and $1$ also belong to
		$A_M^\Sigma$. The inductive cases for
		$\wedge,\vee,\to,T_\ell$, and $\Box$ follow from the fact that
		$A_M^\Sigma$ is closed under all operations of an
		$\mathcal L$-$\mathcal{ML}$-algebra. Therefore,
		\[
		B_M^\Sigma\subseteq A_M^\Sigma.
		\]
		
		Consequently,
		\[
		A_M^\Sigma=B_M^\Sigma.
		\]
	\end{proof}
\end{lem}
Since both $W$ and $\mathcal L$ are finite,
\[
A_M^\Sigma\subseteq\mathcal L^W
\]
is a finite algebra.
\subsection{Observational equivalence and the evaluation image}
Every state $x\in W$ determines a map
\[
\eta_M^\Sigma(x):
A_M^\Sigma\longrightarrow\mathcal L
\]
defined by
\[
\eta_M^\Sigma(x)(f)=f(x).
\]
Since the operations of the
$\mathcal L$-$\mathcal{VL}$-reduct of $\mathcal L^W$ are defined
pointwise, $\eta_M^\Sigma(x)$ is an
$\mathcal L$-$\mathcal{VL}$-homomorphism. Thus
\[
\eta_M^\Sigma:
W\longrightarrow
X_{A_M^\Sigma},
\qquad
x\longmapsto\eta_M^\Sigma(x),
\]
where
\[
X_{A_M^\Sigma}
=
\operatorname{Hom}_{\mathcal{VA}_{\mathcal L}}
(A_M^\Sigma,\mathcal L)
\]
is the carrier of the bitopological dual
$G(A_M^\Sigma)$.
\begin{defn}\label{def:observational-equivalence}
For $x,y\in W$, define
\[
x\equiv_M^\Sigma y
\]
if and only if
\[
f(x)=f(y)
\qquad\text{for every }f\in A_M^\Sigma.
\]
We call $\equiv_M^\Sigma$ the observational equivalence induced by
$\Sigma$.
\end{defn}
It is immediate that $\equiv_M^\Sigma$ is an equivalence relation.
By Lemma~\ref{lem:semantic-image},
\[
\begin{aligned}
	x\equiv_M^\Sigma y
	&\quad\Longleftrightarrow\quad
	\llbracket\varphi\rrbracket_M(x)
	=
	\llbracket\varphi\rrbracket_M(y)
\quad	\text{for every }
	\varphi\in\operatorname{Form}(\Sigma).
\end{aligned}
\]
Thus two states are observationally equivalent precisely when every
formula over $\Sigma$ has the same exact $\mathcal L$-truth value at
those states. Moreover, by the definition of the evaluation map,
\[
\begin{aligned}
	x\equiv_M^\Sigma y
	&\quad\Longleftrightarrow\quad
	f(x)=f(y)
	\text{ for every }f\in A_M^\Sigma\\
	&\quad\Longleftrightarrow\quad
	\eta_M^\Sigma(x)=\eta_M^\Sigma(y).
\end{aligned}
\]

For a map $g:X\to Y$, its kernel equivalence is the relation
\[
\ker_{\mathrm{eq}}(g)
=
\{(x,x')\in X\times X:g(x)=g(x')\}.
\]
Consequently,
\[
\ker_{\mathrm{eq}}(\eta_M^\Sigma)
=
\{(x,y)\in W\times W:x\equiv_M^\Sigma y\},
\]
and hence
\[
\equiv_M^\Sigma
=
\ker_{\mathrm{eq}}(\eta_M^\Sigma).
\]
Thus $\equiv_M^\Sigma$ is the kernel equivalence of
$\eta_M^\Sigma$.
Put
\[
W_M^\Sigma=W/{\equiv_M^\Sigma}
\]
and let
\[
q_M^\Sigma:W\longrightarrow W_M^\Sigma,
\qquad
q_M^\Sigma(x)=[x],
\]
be the quotient map. Define the finite evaluation image by
\[
X_M^\Sigma
=
\eta_M^\Sigma[W]
=
\{\eta_M^\Sigma(x):x\in W\}
\subseteq X_{A_M^\Sigma}.
\]
Since $\eta_M^\Sigma$ has the same value on all members of an
$\equiv_M^\Sigma$-class, it induces a well-defined map
\[
\overline{\eta}_M^\Sigma:
W_M^\Sigma\longrightarrow X_M^\Sigma
\]
given by
\[
\overline{\eta}_M^\Sigma([x])
=
\eta_M^\Sigma(x).
\]
This map is bijective.\\

Equip $X_M^\Sigma$ with the subspace topologies
\[
\tau_i^{M,\Sigma}
=
\{U\cap X_M^\Sigma:
U\in\tau_i^{A_M^\Sigma}\},
\qquad i=1,2,
\]
and define
\[
\alpha_M^\Sigma(\mathcal L')
=
X_M^\Sigma
\cap
\alpha_{A_M^\Sigma}(\mathcal L')
\]
for every subalgebra $\mathcal L'\subseteq\mathcal L$. Finally, let
\[
\mathcal R_M^\Sigma
=
\mathcal R_{A_M^\Sigma}
\cap
(X_M^\Sigma\times X_M^\Sigma).
\]

\begin{prop}\label{prop:finite-dual-image}
	The structure
	\[
	\mathfrak X_M^\Sigma
	=
	\bigl(
	X_M^\Sigma,
	\tau_1^{M,\Sigma},
	\tau_2^{M,\Sigma},
	\alpha_M^\Sigma,
	\mathcal R_M^\Sigma
	\bigr)
	\]
	is an object in $\mathrm{PRBS}_{\mathcal L}$.
	\begin{proof}
	Since $W$ and $\mathcal L$ are finite, the algebra
	\[
	A_M^\Sigma\subseteq\mathcal L^W
	\]
	is finite. Therefore its dual carrier
	\[
	X_{A_M^\Sigma}
	=
	\operatorname{Hom}_{\mathcal{VA}_{\mathcal L}}
	(A_M^\Sigma,\mathcal L)
	\]
	is finite. The underlying bitopological dual is pairwise Hausdorff.
	Hence, by Lemma~\ref{lem:finite-pairwise-discrete}, both of its
	topologies are discrete. Their restrictions to $X_M^\Sigma$ are
	therefore also discrete. Consequently,
	\[
	\bigl(
	X_M^\Sigma,
	\tau_1^{M,\Sigma},
	\tau_2^{M,\Sigma}
	\bigr)
	\]
	is pairwise Hausdorff and pairwise zero-dimensional. It is also
	pairwise compact because $X_M^\Sigma$ is finite. Thus it is a
	pairwise Boolean space.	We next verify the conditions for the structure map. Since
	\[
	\alpha_{A_M^\Sigma}(\mathcal L)
	=
	X_{A_M^\Sigma},
	\]
	we obtain
	\[
	\alpha_M^\Sigma(\mathcal L)
	=
	X_M^\Sigma.
	\]
	Moreover, for subalgebras $\mathcal L_1,\mathcal L_2\subseteq\mathcal L$,
	\[
	\begin{aligned}
		\alpha_M^\Sigma(\mathcal L_1\cap\mathcal L_2)
		&=
		X_M^\Sigma\cap
		\alpha_{A_M^\Sigma}(\mathcal L_1\cap\mathcal L_2)\\
		&=
		X_M^\Sigma\cap
		\alpha_{A_M^\Sigma}(\mathcal L_1)\cap
		\alpha_{A_M^\Sigma}(\mathcal L_2)\\
		&=
		\alpha_M^\Sigma(\mathcal L_1)
		\cap
		\alpha_M^\Sigma(\mathcal L_2).
	\end{aligned}
	\]
	Thus the underlying structure is an object in
	$\mathcal{PBS}_{\mathcal L}$. It remains to verify the relational conditions. For every $h\in X_M^\Sigma$, the successor set
	\[
	R_M^\Sigma[h]
	\]
	is finite, and hence pairwise compact. Therefore condition
	\textup{(R1)} holds.
	
	Since both induced topologies are discrete, every subset of
	$X_M^\Sigma$ belongs to
	\[
	\beta_1^{M,\Sigma}
	=
	\tau_1^{M,\Sigma}\cap\delta_2^{M,\Sigma}.
	\]
	Hence, for every $C\in\beta_1^{M,\Sigma}$, \[
	[R_M^\Sigma]C
	\quad\text{and}\quad
	\langle R_M^\Sigma\rangle C
	\]
	also belong to $\beta_1^{M,\Sigma}$. Thus condition \textup{(R2)}
	holds.
	
	Finally, let $\mathcal L'\subseteq\mathcal L$, let
	\[
	h\in\alpha_M^\Sigma(\mathcal L'),
	\]
	and suppose that
	\[
	hR_M^\Sigma k.
	\]
	Then
	\[
	h\in\alpha_{A_M^\Sigma}(\mathcal L')
	\quad\text{and}\quad
	hR_{A_M^\Sigma}k.
	\]
	Since the canonical dual relation satisfies condition \textup{(R3)},
	we have
	\[
	k\in\alpha_{A_M^\Sigma}(\mathcal L').
	\]
	Also $k\in X_M^\Sigma$, by the definition of $R_M^\Sigma$.
	Therefore \[
	k\in
	X_M^\Sigma\cap
	\alpha_{A_M^\Sigma}(\mathcal L')
	=
	\alpha_M^\Sigma(\mathcal L').
	\]
	Thus condition \textup{(R3)} also holds. Hence $\mathfrak X_M^\Sigma$ is an object in
	$\mathcal{PRBS}_{\mathcal L}$.
	\end{proof}
\end{prop}
The map $\eta_M^\Sigma$ induces a bijection
\[
\overline{\eta}_M^\Sigma:
W_M^\Sigma\longrightarrow X_M^\Sigma
\]
defined by
\[
\overline{\eta}_M^\Sigma([x])
=
\eta_M^\Sigma(x).
\]
\subsection{The reduced relation and its dual characterization}
Define a binary relation
$\mathcal R^{\mathrm{red}}$ on $W_M^\Sigma$ by
\[
[x]\mathcal R^{\mathrm{red}}[y]
\]
if and only if there exist $x',y'\in W$ such that
\[
x'\equiv_M^\Sigma x,
\qquad
y'\equiv_M^\Sigma y,
\qquad
x'\mathcal R y'.
\]
For $f\in A_M^\Sigma$, define
\[
\bar f:W_M^\Sigma\longrightarrow\mathcal L
\]
by
\[
\bar f([x])=f(x).
\]
This is well defined by the definition of
$\equiv_M^\Sigma$.
\begin{lem}\label{lem:modal-quotient}
	For every $f\in A_M^\Sigma$,
	\[
	\overline{\Box_{\mathcal R}f}
	=
	\Box_{\mathcal R^{\mathrm{red}}}\bar f.
	\]
	\begin{proof}
		Let $[x]\in W_M^\Sigma$. Then
		\[
	\begin{aligned}
		\bigl(
		\Box_{\mathcal R^{\mathrm{red}}}\bar f
		\bigr)([x])
		&=
		\bigwedge
		\bigl\{
		\bar f([y]):
		[x]\mathcal R^{\mathrm{red}}[y]
		\bigr\}\\
		&=
		\bigwedge_{x'\in[x]}
		\bigwedge
		\bigl\{
		f(y):x'\mathcal R y
		\bigr\}\\
		&=
		\bigwedge_{x'\in[x]}
		(\Box_{\mathcal R}f)(x').
	\end{aligned}
	\]
	Since $A_M^\Sigma$ is closed under $\Box_{\mathcal R}$,
	\[
	\Box_{\mathcal R}f\in A_M^\Sigma.
	\]
	Therefore $\Box_{\mathcal R}f$ is constant on the class $[x]$, and
	hence
	\[
	\bigwedge_{x'\in[x]}
	(\Box_{\mathcal R}f)(x')
	=
	(\Box_{\mathcal R}f)(x).
	\]
	Thus
	\[
	\bigl(
	\Box_{\mathcal R^{\mathrm{red}}}\bar f
	\bigr)([x])
	=
	\overline{\Box_{\mathcal R}f}([x]).
	\]
	\end{proof}
\end{lem}
Lemma \ref{lem:modal-quotient} proves that the existentially induced relation is
compatible with every generated modal observation. The next result
identifies this relation with the canonical relation of the
bitopological dual.
\begin{prop}[Relation recovery]\label{prop:relation-recovery}
	For all $x,y\in W$,
	\[
	[x]\mathcal R^{\mathrm{red}}[y]
	\quad\Longleftrightarrow\quad
	\eta_M^\Sigma(x)
	\mathcal R_{A_M^\Sigma}
	\eta_M^\Sigma(y).
	\]
	\begin{proof}
		Suppose first that
		\[
		[x]\mathcal R^{\mathrm{red}}[y].
		\]
		Choose $x'\in[x]$ and $y'\in[y]$ such that
		\[
		x'\mathcal R y'.
		\]
		For every $f\in A_M^\Sigma$,
		\[
		(\Box_{\mathcal R}f)(x')
		=
		\bigwedge_{x'\mathcal R z}f(z)
		\leq f(y').
		\]
		Since $x'\in[x]$ and $y'\in[y]$, we have
		\[
		x'\equiv_M^\Sigma x
		\qquad\text{and}\qquad
		y'\equiv_M^\Sigma y.
		\]
		By definition of $\equiv_M^\Sigma$, every member of
		$A_M^\Sigma$ has the same value at observationally equivalent states.
		Since
		\[
		f,\Box_{\mathcal R}f\in A_M^\Sigma,
		\]
		it follows that
		\[
		(\Box_{\mathcal R}f)(x')
		=
		(\Box_{\mathcal R}f)(x)
		\qquad\text{and}\qquad
		f(y')=f(y).
		\] Hence
		\[
		(\Box_{\mathcal R}f)(x)
		\leq f(y).
		\]
		Equivalently,
		\[
		\eta_M^\Sigma(x)(\Box f)
		\leq
		\eta_M^\Sigma(y)(f)
		\]
		for every $f\in A_M^\Sigma$. By the definition of the canonical
		relation $\mathcal R_{A_M^\Sigma}$,
		\[
		\eta_M^\Sigma(x)
		\mathcal R_{A_M^\Sigma}
		\eta_M^\Sigma(y).
		\]
			Conversely, suppose that
		\[
		[x]\not\mathcal R^{\mathrm{red}}[y].
		\]
		Then no representative of $[x]$ has an
		$\mathcal R$-successor belonging to $[y]$. Put
		\[
		S
		=
		\bigcup_{x'\in[x]}\mathcal R[x'].
		\]
		Thus
		\[
		S\cap[y]=\varnothing.
		\]
		
		For every $z\in S$, we have
		\[
		z\not\equiv_M^\Sigma y.
		\]
		Hence there exists $f_z\in A_M^\Sigma$ such that
		\[
		f_z(z)\neq f_z(y).
		\]
		Define
		\[
		c_z
		=
		T_{f_z(z)}(f_z).
		\]
		Then
		\[
		c_z(z)=1
		\qquad\text{and}\qquad
		c_z(y)=0.
		\]
		Since $S$ is finite, put
		\[
		c
		=
		\bigvee_{z\in S}c_z.
		\]
		When $S=\varnothing$, the right-hand side is understood as the empty
		join $0$. In either case,
		\[
		c(y)=0
		\]
		and
		\[
		c(z)=1
		\qquad
		\text{for every }z\in S.
		\]
		For every $x'\in[x]$, all successors of $x'$ belong to $S$.
		Therefore
		\[
		(\Box_{\mathcal R}c)(x')=1.
		\]
		In particular,
		\[
		(\Box_{\mathcal R}c)(x)=1.
		\]
		Consequently,
		\[
		\eta_M^\Sigma(x)(\Box c)
		=
		1
		\not\leq
		0
		=
		\eta_M^\Sigma(y)(c).
		\]
		Thus
		\[
		\eta_M^\Sigma(x)
		\not\mathcal R_{A_M^\Sigma}
		\eta_M^\Sigma(y).
		\]
		This proves the converse implication.
	\end{proof}
\end{prop}
\begin{thm}[Dual-image reduction]\label{thm:dual-image-reduction}
The induced evaluation map
\[
\overline{\eta}_M^\Sigma:
W_M^\Sigma\longrightarrow X_M^\Sigma,
\qquad
\overline{\eta}_M^\Sigma([x])
=
\eta_M^\Sigma(x),
\]
is a bijection. Moreover, for all $x,y\in W$,
\[
[x]\mathcal R^{\mathrm{red}}[y]
\quad\Longleftrightarrow\quad
\overline{\eta}_M^\Sigma([x])\,
\mathcal R_M^\Sigma\,
\overline{\eta}_M^\Sigma([y]).
\]
Consequently, \[
\overline{\eta}_M^\Sigma:
\bigl(W_M^\Sigma,\mathcal R^{\mathrm{red}}\bigr)
\longrightarrow
\bigl(X_M^\Sigma,\mathcal R_M^\Sigma\bigr)
\]
is an isomorphism of relational structures.\\
Define, for $i=1,2$,
\[
\tau_i^{\mathrm{red}}
=
\left\{
(\overline{\eta}_M^\Sigma)^{-1}[U]:
U\in\tau_i^{M,\Sigma}
\right\},
\]
and, for every subalgebra $\mathcal L'\subseteq\mathcal L$, define
\[
\alpha_M^{\mathrm{red}}(\mathcal L')
=
(\overline{\eta}_M^\Sigma)^{-1}
\bigl[
\alpha_M^\Sigma(\mathcal L')
\bigr].
\]Then
\[
\overline{\eta}_M^\Sigma:
\bigl(
W_M^\Sigma,
\tau_1^{\mathrm{red}},
\tau_2^{\mathrm{red}},
\alpha_M^{\mathrm{red}},
\mathcal R^{\mathrm{red}}
\bigr)
\longrightarrow
\mathfrak X_M^\Sigma
\]
is an isomorphism in $\mathcal{PRBS}_{\mathcal L}$.
\begin{proof}
We first show that $\overline{\eta}_M^\Sigma$ is well defined.
Suppose that $[x]=[y]$. Then
\[
x\equiv_M^\Sigma y,
\]
and hence, by Definition~\ref{def:observational-equivalence},
\[
\eta_M^\Sigma(x)=\eta_M^\Sigma(y).
\]
Therefore,
\[
\overline{\eta}_M^\Sigma([x])
=
\overline{\eta}_M^\Sigma([y]).
\]
The map is surjective because
\[
X_M^\Sigma=\eta_M^\Sigma[W].
\]
To prove injectivity, suppose that
\[
\overline{\eta}_M^\Sigma([x])
=
\overline{\eta}_M^\Sigma([y]).
\]
Then
\[
\eta_M^\Sigma(x)=\eta_M^\Sigma(y).
\]
Since $\equiv_M^\Sigma$ is the kernel equivalence of
$\eta_M^\Sigma$, it follows that
\[
x\equiv_M^\Sigma y,
\]
and therefore $[x]=[y]$. Thus
$\overline{\eta}_M^\Sigma$ is a bijection.

By Proposition~\ref{prop:relation-recovery}, \[
[x]\mathcal R^{\mathrm{red}}[y]
\quad\Longleftrightarrow\quad
\eta_M^\Sigma(x)\,
\mathcal R_{A_M^\Sigma}\,
\eta_M^\Sigma(y).
\]
Since
\[
\overline{\eta}_M^\Sigma([x])
=
\eta_M^\Sigma(x),
\qquad
\overline{\eta}_M^\Sigma([y])
=
\eta_M^\Sigma(y),
\]
and $\mathcal R_M^\Sigma$ is the restriction of
$\mathcal R_{A_M^\Sigma}$ to $X_M^\Sigma$, we obtain \[
[x]\mathcal R^{\mathrm{red}}[y]
\quad\Longleftrightarrow\quad
\overline{\eta}_M^\Sigma([x])\,
\mathcal R_M^\Sigma\,
\overline{\eta}_M^\Sigma([y]).
\]
Hence $\overline{\eta}_M^\Sigma$ is an isomorphism of relational
structures.\\

By the definitions of $\tau_i^{\mathrm{red}}$, the map
$\overline{\eta}_M^\Sigma$ is a homeomorphism between
$(W_M^\Sigma,\tau_i^{\mathrm{red}})$ and
$(X_M^\Sigma,\tau_i^{M,\Sigma})$ for $i=1,2$. Moreover,
\[
\overline{\eta}_M^\Sigma
\bigl[
\alpha_M^{\mathrm{red}}(\mathcal L')
\bigr]
=
\alpha_M^\Sigma(\mathcal L')
\]
for every subalgebra $\mathcal L'\subseteq\mathcal L$. Together with the preservation and reflection of the relations, this
shows that $\overline{\eta}_M^\Sigma$ is an isomorphism in
$\mathcal{PRBS}_{\mathcal L}$.
\end{proof}
\end{thm}
Thus the reduced relational structure is not chosen independently of
the duality. Its states are the evaluation points determined by the
original model, and its relation is the restriction of the canonical
dual relation to those points.
\subsection{Exact truth-value preservation}

Define the reduced $\Sigma$-model
\[
M_\Sigma^{\mathrm{red}}
=
\bigl(
W_M^\Sigma,
\mathcal R^{\mathrm{red}},
V^{\mathrm{red}}
\bigr),
\]
where, for every $p\in\Sigma$,
\[
V^{\mathrm{red}}(p)([x])
=
V(p)(x).
\]
This definition is independent of the representative. Indeed, if
$[x]=[y]$, then $x\equiv_M^\Sigma y$. Since
\[
V(p)\in A_M^\Sigma,
\]
Definition~\ref{def:observational-equivalence} gives
\[
V(p)(x)=V(p)(y).
\]
Equivalently,
\[
V^{\mathrm{red}}(p)=\overline{V(p)}.
\]
\begin{thm}[Exact preservation]\label{thm:exact-preservation}
For every $\varphi\in\operatorname{Form}(\Sigma)$,
\[
\llbracket\varphi\rrbracket_{M_\Sigma^{\mathrm{red}}}
=
\overline{\llbracket\varphi\rrbracket_M}.
\]
Equivalently, for every $x\in W$,
\[
\llbracket\varphi\rrbracket_
{M_\Sigma^{\mathrm{red}}}([x])
=
\llbracket\varphi\rrbracket_M(x).
\]
In terms of the quotient map
\[
q_M^\Sigma:W\longrightarrow W_M^\Sigma,
\]
the result can also be written as \[
\llbracket\varphi\rrbracket_M
=
\llbracket\varphi\rrbracket_
{M_\Sigma^{\mathrm{red}}}
\circ q_M^\Sigma.
\]
\begin{proof}
We proceed by structural induction on $\varphi$.

If $\varphi=p\in\Sigma$, then
\[
\llbracket p\rrbracket_{M_\Sigma^{\mathrm{red}}}
=
V^{\mathrm{red}}(p)
=
\overline{V(p)}
=
\overline{\llbracket p\rrbracket_M}.
\]
The cases $\varphi=0$ and $\varphi=1$ are immediate. Suppose that the result holds for $\psi$ and $\chi$. Since the
non-modal operations are interpreted pointwise, we obtain, for
$\circ\in\{\wedge,\vee,\to\}$,
\[
\begin{aligned}
	\llbracket\psi\circ\chi\rrbracket_
	{M_\Sigma^{\mathrm{red}}}
	&=
	\llbracket\psi\rrbracket_
	{M_\Sigma^{\mathrm{red}}}
	\circ
	\llbracket\chi\rrbracket_
	{M_\Sigma^{\mathrm{red}}}\\
	&=
	\overline{\llbracket\psi\rrbracket_M}
	\circ
	\overline{\llbracket\chi\rrbracket_M}\\
	&=
	\overline{
		\llbracket\psi\rrbracket_M
		\circ
		\llbracket\chi\rrbracket_M
	}\\
	&=
	\overline{
		\llbracket\psi\circ\chi\rrbracket_M
	}.
\end{aligned}
\]
Similarly, for every $\ell\in\mathcal L$,
\[
\begin{aligned}
	\llbracket T_\ell(\psi)\rrbracket_
	{M_\Sigma^{\mathrm{red}}}
	&=
	T_\ell\bigl(
	\llbracket\psi\rrbracket_
	{M_\Sigma^{\mathrm{red}}}
	\bigr)\\
	&=
	T_\ell\bigl(
	\overline{\llbracket\psi\rrbracket_M}
	\bigr)\\
	&=
	\overline{
		T_\ell(\llbracket\psi\rrbracket_M)
	}\\
	&=
	\overline{
		\llbracket T_\ell(\psi)\rrbracket_M
	}.
\end{aligned}
\]
Finally, suppose that $\varphi=\Box\psi$. By
Lemma~\ref{lem:semantic-image},
\[
\llbracket\psi\rrbracket_M\in A_M^\Sigma.
\]
Hence Lemma~\ref{lem:modal-quotient} applies. Using the induction
hypothesis, we obtain
\[
\begin{aligned}
	\llbracket\Box\psi\rrbracket_
	{M_\Sigma^{\mathrm{red}}}
	&=
	\Box_{\mathcal R^{\mathrm{red}}}
	\left(
	\llbracket\psi\rrbracket_
	{M_\Sigma^{\mathrm{red}}}
	\right)\\
	&=
	\Box_{\mathcal R^{\mathrm{red}}}
	\left(
	\overline{\llbracket\psi\rrbracket_M}
	\right)\\
	&=
	\overline{
		\Box_{\mathcal R}
		\llbracket\psi\rrbracket_M
	}\\
	&=
	\overline{
		\llbracket\Box\psi\rrbracket_M
	}.
\end{aligned}
\]
This completes the induction.
\end{proof}
\end{thm}
Thus the reduction preserves the actual value of every formula in
$\mathcal L$, not merely whether its value is $1$. In particular, for
every $\ell\in\mathcal L$,
\[
\ell\leq\llbracket\varphi\rrbracket_M(x)
\quad\Longleftrightarrow\quad
\ell\leq
\llbracket\varphi\rrbracket_
{M_\Sigma^{\mathrm{red}}}([x]).
\]
\subsection{Verification on the reduced model}

Let $\Sigma$ contain all propositional variables occurring in
$\varphi$. Since the reduction preserves exact truth values, model
checking may be carried out on the reduced model without changing the
value of $\varphi$.

\begin{cor}[Verification consequence]
	\label{prop:reduced-verification}
	Let $M=(W,\mathcal R,V)$ be a finite $\mathcal L$-valued Kripke model.
	Then
	\[
	M\models\varphi
	\quad\Longleftrightarrow\quad
	M_\Sigma^{\mathrm{red}}\models\varphi,
	\]
	where
	\[
	M\models\varphi
	\]
	means that
	\[
	\llbracket\varphi\rrbracket_M(x)=1
	\qquad\text{for every }x\in W.
	\]
	More generally, for every $x\in W$ and every $\ell\in\mathcal L$,
	\[
	\ell\leq\llbracket\varphi\rrbracket_M(x)
	\quad\Longleftrightarrow\quad
	\ell\leq
	\llbracket\varphi\rrbracket_{M_\Sigma^{\mathrm{red}}}([x]).
	\]
	\begin{proof}
		By Theorem~\ref{thm:exact-preservation},
		\[
		\llbracket\varphi\rrbracket_M(x)
		=
		\llbracket\varphi\rrbracket_{M_\Sigma^{\mathrm{red}}}([x])
		\]
		for every $x\in W$. Hence the threshold equivalence follows
		immediately. For the first assertion, if $M\models\varphi$, then every quotient
		state $[x]\in W_M^\Sigma$ has value $1$, since the quotient map
		$q_M^\Sigma:W\twoheadrightarrow W_M^\Sigma$ is surjective. The
		converse follows from the same equality.
	\end{proof}
\end{cor}

\subsection{The universal property of the quotient}\label{subsec:universal-property}
The following minimality property is relative to the observation
algebra $A_M^\Sigma$. It concerns surjective maps of state sets through
which all generated observations factor; no relational or
bitopological compatibility of these maps is assumed.
\begin{defn}\label{def:observation-preserving-surjection}
Let $Y$ be a set and let
\[
h:W\twoheadrightarrow Y
\]
be a surjection. We say that $h$ preserves the generated
$\Sigma$-observations if, for every $f\in A_M^\Sigma$, there exists a
map
\[
f_h:Y\longrightarrow\mathcal L
\]
such that
\[
f=f_h\circ h.
\]
Since $h$ is surjective, the map $f_h$, when it exists, is unique.
\end{defn}
The factorization condition has the following equivalent
interpretation.
\begin{lem}\label{lem:observation-fibres}
	A surjection $h:W\twoheadrightarrow Y$ preserves the generated
	$\Sigma$-observations if and only if
	\[
	h(x)=h(y)
	\quad\Longrightarrow\quad
	x\equiv_M^\Sigma y
	\]
	for all $x,y\in W$. Equivalently,
	\[
	\ker_{\mathrm{eq}}(h)
	\subseteq
	\equiv_M^\Sigma.
	\]
	\begin{proof}
		Suppose first that $h$ preserves the generated observations and that
		$h(x)=h(y)$. For every $f\in A_M^\Sigma$, choose
		$f_h:Y\to\mathcal L$ such that $f=f_h\circ h$. Then
		\[
		f(x)
		=
		f_h(h(x))
		=
		f_h(h(y))
		=
		f(y).
		\]
		Hence $x\equiv_M^\Sigma y$. Conversely, suppose that
		\[
		h(x)=h(y)
		\quad\Longrightarrow\quad
		x\equiv_M^\Sigma y.
		\]
		For $f\in A_M^\Sigma$, define
		\[
		f_h(h(x))=f(x).
		\]
		This is well defined: if $h(x)=h(y)$, then
		$x\equiv_M^\Sigma y$, and therefore $f(x)=f(y)$. By construction, \[
		f=f_h\circ h.
		\]
		Thus $h$ preserves every generated observation.
		\end{proof}
		\end{lem}
		The quotient map
		\[
		q_M^\Sigma:W\twoheadrightarrow W_M^\Sigma
		\]
		itself preserves the generated observations. Indeed, for every
		$f\in A_M^\Sigma$, the map
		\[
		\bar f:W_M^\Sigma\longrightarrow\mathcal L,
		\qquad
		\bar f([x])=f(x),
		\]
		is well defined and satisfies
		\[
		f=\bar f\circ q_M^\Sigma.
		\]
\begin{thm}[Universal factorization]\label{thm:minimality}
Let
\[
h:W\twoheadrightarrow Y
\]
be a surjection preserving the generated $\Sigma$-observations. Then
there exists a unique surjective map
\[
k:Y\twoheadrightarrow W_M^\Sigma
\]
such that
\[
q_M^\Sigma=k\circ h.
\]
Consequently,
\[
|W_M^\Sigma|\leq |Y|.
\]
\begin{proof}
Define
\[
k(h(x))=[x].
\]
We first show that $k$ is well defined. Suppose that
\[
h(x)=h(y).
\]
Since $h$ preserves the generated observations,
Lemma~\ref{lem:observation-fibres} gives
\[
x\equiv_M^\Sigma y.
\]
Hence
\[
[x]=[y],
\]
so the definition of $k$ is independent of the chosen representative
of an element of $Y$. For every $x\in W$,
\[
(k\circ h)(x)
=
k(h(x))
=
[x]
=
q_M^\Sigma(x).
\]
Therefore,
\[
q_M^\Sigma=k\circ h.
\]
The map $k$ is surjective because, for every class
$[x]\in W_M^\Sigma$,
\[
[x]=k(h(x)).
\]

Finally, suppose that another map
\[
k':Y\longrightarrow W_M^\Sigma
\]
satisfies
\[
q_M^\Sigma=k'\circ h.
\]
Every element of $Y$ has the form $h(x)$ because $h$ is surjective. Therefore,
\[
k'(h(x))
=
q_M^\Sigma(x)
=
k(h(x))
\]
for every $x\in W$. Hence $k'=k$.\\
The inequality
\[
|W_M^\Sigma|\leq |Y|
\]
follows from the surjectivity of $k$.
\end{proof}
\end{thm}
\begin{cor}\label{cor:minimal-exact-reduction}
Let
\[
N=(Y,\mathcal S,U)
\]
be an $\mathcal L$-valued Kripke model over $\Sigma$, and let
\[
h:W\twoheadrightarrow Y
\]
be a surjection satisfying
\[
\llbracket\varphi\rrbracket_M
=
\llbracket\varphi\rrbracket_N\circ h
\]
for every $\varphi\in\operatorname{Form}(\Sigma)$. Then
\[
|W_M^\Sigma|\leq |Y|.
\]
	\begin{proof}
	Let $f\in A_M^\Sigma$. By
	Lemma~\ref{lem:semantic-image}, there exists
	$\varphi\in\operatorname{Form}(\Sigma)$ such that
	\[
	f=\llbracket\varphi\rrbracket_M.
	\]
	By assumption,
	\[
	f
	=
	\llbracket\varphi\rrbracket_N\circ h.
	\]
	Hence $h$ preserves every generated $\Sigma$-observation.
	The conclusion follows from
	Theorem  \ref{thm:minimality}.

	\end{proof}
\end{cor}
\begin{rem}\label{rem:minimality-scope}
The preceding result is a relative minimality statement. It does not
assert that $M_\Sigma^{\mathrm{red}}$ is a smallest
$\mathcal L$-valued model satisfying the same formulas as $M$, nor
that it is a smallest countermodel to any particular formula. It
states that, among surjective images of the state set $W$ preserving
the exact values of all formulas over $\Sigma$, the quotient
$W_M^\Sigma$ has the least number of states. Equivalently,
$\equiv_M^\Sigma$ is the largest equivalence relation on $W$ through
which all generated $\Sigma$-observations factor.
\end{rem}

\subsection{Finite construction}\label{subsec:finite-construction}
We describe a finite procedure for constructing the observation
algebra and the reduced model. Let $\mathbf 0,\mathbf 1\in\mathcal L^W$
denote the constant functions with values $0$ and $1$, respectively,
and put
\[
A_0
=
\{\mathbf 0,\mathbf 1\}
\cup
\{V(p):p\in\Sigma\}.
\]
Having defined $A_n$, put
\[
\begin{aligned}
	A_{n+1}
	={}&A_n\\
	&\cup
	\{f\wedge g,\;f\vee g,\;f\to g:
	f,g\in A_n\}\\
	&\cup
	\{T_\ell(f):
	f\in A_n,\ \ell\in\mathcal L\}\\
	&\cup
	\{\Box_{\mathcal R}f:
	f\in A_n\}.
\end{aligned}
\]
Thus
\[
A_0\subseteq A_1\subseteq A_2\subseteq\cdots
\subseteq\mathcal L^W.
\]
Since $W$ and $\mathcal L$ are finite,
\[
|\mathcal L^W|
=
|\mathcal L|^{|W|}
\]
is finite. Hence there exists $N\in\mathbb N$ such that
\[
A_N=A_{N+1}.
\]
We claim that
\[
A_N=A_M^\Sigma.
\]
Indeed, $A_0\subseteq A_M^\Sigma$, and an induction on $n$ gives
\[
A_n\subseteq A_M^\Sigma
\]
for every $n$. Hence
\[
A_N\subseteq A_M^\Sigma.
\]
Conversely, the equality $A_N=A_{N+1}$ shows that $A_N$ is closed
under all operations of an $\mathcal L$-$\mathcal{ML}$-algebra.
Since it contains every generator $V(p)$, $p\in\Sigma$, the
minimality of the generated subalgebra gives
\[
A_M^\Sigma\subseteq A_N.
\]
Therefore,
\[
A_N=A_M^\Sigma.
\]
Now enumerate
\[
A_M^\Sigma=\{f_1,\ldots,f_m\}.
\]
For each $x\in W$, define its evaluation vector by
\[
\operatorname{ev}_M^\Sigma(x)
=
\bigl(f_1(x),\ldots,f_m(x)\bigr)
\in\mathcal L^m.
\]
Then
\[
x\equiv_M^\Sigma y
\quad\Longleftrightarrow\quad
\operatorname{ev}_M^\Sigma(x)
=
\operatorname{ev}_M^\Sigma(y).
\]
Thus the state set $W_M^\Sigma$ is obtained by grouping together
states having identical evaluation vectors.

Finally, the relation $\mathcal R^{\mathrm{red}}$ is computed by
\[
[x]\mathcal R^{\mathrm{red}}[y]
\]
if and only if there exist $x'\in[x]$ and $y'\in[y]$ such that
\[
x'\mathcal R y',
\]
and the reduced valuation is given by
\[
V^{\mathrm{red}}(p)([x])=V(p)(x)
\qquad(p\in\Sigma).
\]
This yields a finite effective construction of
$M_\Sigma^{\mathrm{red}}$. No claim of an optimal implementation or
complexity bound is made.

\subsection{A finite three-valued example}
\label{subsec:finite-example}

We illustrate that observationally equivalent states need not have
identical successor sets. Let
\[
\mathcal L=\{0<a<1\}
\]
be the three-element Heyting chain, let
\[
W=\{x,y,u,v\},
\]
and take
\[
\Sigma=\{p\}.
\]
Define
\[
\mathcal R[x]=\{u\},
\qquad
\mathcal R[y]=\{v\},
\qquad
\mathcal R[u]=\mathcal R[v]=\varnothing,
\]
and let
\[
V(p)(x)=V(p)(y)=a,
\qquad
V(p)(u)=V(p)(v)=1.
\]

Although
\[
\mathcal R[x]\neq\mathcal R[y],
\]
the states $x$ and $y$ are observationally equivalent. To see this,
put
\[
B=
\left\{
f\in\mathcal L^W:
f(x)=f(y)
\text{ and }
f(u)=f(v)
\right\}.
\]
The set $B$ contains the constant functions and $V(p)$, and it is
closed under all pointwise operations of an
$\mathcal L$-$\mathcal{VL}$-algebra.

It is also closed under $\Box_{\mathcal R}$. Indeed, for every
$f\in B$,
\[
\begin{aligned}
	(\Box_{\mathcal R}f)(x)
	&=f(u),\\
	(\Box_{\mathcal R}f)(y)
	&=f(v),
\end{aligned}
\]
and these values are equal because $f(u)=f(v)$. Moreover,
\[
(\Box_{\mathcal R}f)(u)
=
(\Box_{\mathcal R}f)(v)
=
1,
\]
since $u$ and $v$ have no successors. Thus
\[
\Box_{\mathcal R}f\in B.
\]

Consequently, $B$ is an $\mathcal L$-$\mathcal{ML}$-subalgebra of
$\mathcal L^W$ containing the generator $V(p)$. By the minimality of
the generated observation algebra,
\[
A_M^\Sigma\subseteq B.
\]
Therefore every $f\in A_M^\Sigma$ satisfies
\[
f(x)=f(y)
\qquad\text{and}\qquad
f(u)=f(v).
\]
Hence
\[
x\equiv_M^\Sigma y
\qquad\text{and}\qquad
u\equiv_M^\Sigma v.
\]

These are the only identifications, since
\[
V(p)(x)=a\neq1=V(p)(u).
\]
Thus the observational equivalence classes are
\[
[x]=[y]=\{x,y\},
\qquad
[u]=[v]=\{u,v\}.
\]

The reduced state set is therefore
\[
W_M^\Sigma
=
\bigl\{\{x,y\},\{u,v\}\bigr\}.
\]
Its relation is
\[
\{x,y\}\,
\mathcal R^{\mathrm{red}}\,
\{u,v\},
\]
and $\{u,v\}$ has no successor. The reduced valuation is
\[
V^{\mathrm{red}}(p)(\{x,y\})=a,
\qquad
V^{\mathrm{red}}(p)(\{u,v\})=1.
\]

Furthermore,
\[
\eta_M^\Sigma(x)=\eta_M^\Sigma(y),
\qquad
\eta_M^\Sigma(u)=\eta_M^\Sigma(v).
\]
Hence the evaluation image $X_M^\Sigma$ also has two points.
Theorem~\ref{thm:dual-image-reduction} identifies its restricted
canonical relation with $\mathcal R^{\mathrm{red}}$, while
Theorem~\ref{thm:exact-preservation} ensures that every formula over
$p$ has the same exact $\mathcal L$-truth value in the original and
reduced models.
\section{Bounded exact-value certificates and counterexample extraction}
\label{5}
Tree-like counterexamples have previously been studied in classical
model checking, particularly for branching-time specifications
\cite{clarke2002tree}. Counterexample generation has also been
developed for multi-valued CTL over De Morgan algebras
\cite{gurfinkel2003generating} and for multi-valued bounded model
checking \cite{andrade2008direct}. The result below addresses a
different problem. For Fitting's finite Heyting-valued modal logic, it
uses the finite height of $\mathcal L$ to bound the successors needed
to preserve the meet defining each boxed subformula. The resulting
tree-like certificate preserves the exact $\mathcal L$-truth value and
has a size bound independent of the size of the original model.\\
The reduction developed in Section \ref{4} preserves every generated
modal observation of a given finite model. We now consider a
formula-dependent construction of a different kind. For a formula
$\varphi$ evaluated at a state $x$, we extract a finite tree-like
model that preserves the exact value of $\varphi$ at $x$. Its depth is
bounded by the modal depth of $\varphi$, while its branching is bounded
in terms of the height of the truth-value algebra and the number of
boxed subformulas of $\varphi$. The resulting certificate may then be
further compressed by the reduction of Section \ref{4}.

\subsection{Finite witnesses for lattice meets}

Let
\[
h(\mathcal L)
\]
denote the maximum number of elements in a chain of the finite Heyting
algebra $\mathcal L$.

\begin{lem}[Meet compression]\label{lem:meet-compression}
For every $S\subseteq\mathcal L$, there exists
$S_0\subseteq S$ such that
\[
|S_0|\leq h(\mathcal L)-1
\]
and
\[
\bigwedge S_0=\bigwedge S.
\]
Here, as usual,
\[
\bigwedge\varnothing=1.
\]
	\begin{proof}
		Let
		\[
		m=\bigwedge S.
		\]
		If $S=\varnothing$, take $S_0=\varnothing$. Then
		\[
		\bigwedge S_0
		=
		\bigwedge\varnothing
		=
		1
		=
		\bigwedge S,
		\]
		and
		\[
		|S_0|=0\leq h(\mathcal L)-1.
		\]
		Now suppose that $S\neq\varnothing$. Starting with $c_0=1$, construct a descending sequence as follows.\\
		If $c_0=m$, then we may again take $S_0=\varnothing$, and the result
		follows. Otherwise,
		\[
		m<c_0.
		\]
		More generally, suppose that $c_i\neq m$. Since
		\[
		m=\bigwedge S\leq c_i,
		\]
		we have
		\[
		m<c_i.
		\]
		We claim that there exists $s_i\in S$ such that
		\[
		c_i\wedge s_i<c_i.
		\]
		Indeed, if
		\[
		c_i\wedge s=c_i
		\]
		for every $s\in S$, then
		\[
		c_i\leq s
		\]
		for every $s\in S$. Hence $c_i$ is a lower bound of $S$, and therefore
		\[
		c_i\leq\bigwedge S=m,
		\]
		contradicting $m<c_i$. Choose such an $s_i\in S$ and define
		\[
		c_{i+1}=c_i\wedge s_i.
		\]
		Then
		\[
		c_{i+1}<c_i.
		\]
		Repeating this construction, we obtain a strictly descending chain
		\[
		1=c_0>c_1>\cdots>c_r.
		\]
		Since $\mathcal L$ is finite, this process cannot continue indefinitely.
		It must terminate at some $c_r$. Moreover, it can terminate only when
		\[
		c_r=m;
		\]
		otherwise, if $c_r\neq m$, the preceding argument would produce
		$s_r\in S$ such that
		\[
		c_r\wedge s_r<c_r,
		\]
		contradicting termination. Hence
		\[
		1=c_0>c_1>\cdots>c_r=m.
		\]
		By definition, $h(\mathcal L)$ is the maximum number of elements in a
		chain of $\mathcal L$. The above chain contains $r+1$ elements, so
		\[
		r+1\leq h(\mathcal L),
		\]
		and therefore
		\[
		r\leq h(\mathcal L)-1.
		\]
		
		Now put
		\[
		S_0=\{s_0,\ldots,s_{r-1}\}.
		\]
		Then
		\[
		|S_0|\leq r\leq h(\mathcal L)-1.
		\]
		Also, by construction,
		\[
		\begin{aligned}
			c_r
			&=
			c_{r-1}\wedge s_{r-1}\\
			&=
			s_0\wedge s_1\wedge\cdots\wedge s_{r-1}\\
			&=
			\bigwedge S_0.
		\end{aligned}
		\]
		Since $c_r=m=\bigwedge S$, we conclude that
		\[
		\bigwedge S_0=\bigwedge S.
		\]
		Thus
		\[
		|S_0|\leq h(\mathcal L)-1
		\qquad\text{and}\qquad
		\bigwedge S_0=\bigwedge S.
		\]

		
	\end{proof}
\end{lem}
The preceding lemma shows that the exact value of a formula of the
form $\Box\psi$ at a state can be determined by a bounded number of
its successors. More precisely, at most $h(\mathcal L)-1$ suitably chosen successors
are sufficient to preserve the meet defining the value of the
$\Box$-formula.

\begin{cor}\label{cor:modal-witness-set}
	Let $M=(W,\mathcal R,V)$ be an $\mathcal L$-valued Kripke model,
	let $w\in W$, and let $\psi$ be a formula. There exists a set
	\[
	E(w,\psi)\subseteq\mathcal R[w]
	\]
	such that
	\[
	|E(w,\psi)|\leq h(\mathcal L)-1
	\]
	and
	\[
	\bigwedge_{y\in E(w,\psi)}
	\llbracket\psi\rrbracket_M(y)
	=
	\llbracket\Box\psi\rrbracket_M(w).
	\]
	\begin{proof}
	Let
	\[
	S=
	\left\{
	\llbracket\psi\rrbracket_M(y):
	y\in\mathcal R[w]
	\right\}
	\subseteq\mathcal L.
	\]
	Then, by the semantics of $\Box$,
	\[
	\llbracket\Box\psi\rrbracket_M(w)
	=
	\bigwedge S.
	\]
	By Lemma~\ref{lem:meet-compression}, there exists
	$S_0\subseteq S$ such that
	\[
	|S_0|\leq h(\mathcal L)-1
	\]
	and
	\[
	\bigwedge S_0=\bigwedge S.
	\]
	For each $\ell\in S_0$, choose a successor
	$y_\ell\in\mathcal R[w]$ such that
	\[
	\llbracket\psi\rrbracket_M(y_\ell)=\ell,
	\]
	and define
	\[
	E(w,\psi)=\{y_\ell:\ell\in S_0\}.
	\]
	Then
	\[
	|E(w,\psi)|
	\leq |S_0|
	\leq h(\mathcal L)-1.
	\]
	Moreover,
	\[
	\begin{aligned}
		\bigwedge_{y\in E(w,\psi)}
		\llbracket\psi\rrbracket_M(y)
		&=
		\bigwedge_{\ell\in S_0}\ell\\
		&=
		\bigwedge S_0\\
		&=
		\bigwedge S\\
		&=
		\llbracket\Box\psi\rrbracket_M(w).
	\end{aligned}
	\]
	If $\mathcal R[w]=\varnothing$, take
	$E(w,\psi)=\varnothing$; then both meets are equal to $1$.
	\end{proof}
\end{cor}

\subsection{The bounded certificate construction}

For a formula $\varphi$, let
\[
\operatorname{Sub}_{\Box}(\varphi)
=
\{
\Box\psi:
\Box\psi\text{ is a subformula of }\varphi
\},
\]
and put
\[
m_{\Box}(\varphi)
=
\bigl|
\operatorname{Sub}_{\Box}(\varphi)
\bigr|.
\]
Let
\[
d=\operatorname{md}(\varphi)
\]
be the modal depth of $\varphi$, and define
\[
b_{\mathcal L,\varphi}
=
\bigl(h(\mathcal L)-1\bigr)m_{\Box}(\varphi).
\]
Fix an $\mathcal L$-valued Kripke model
\[
M=(W,\mathcal R,V)
\]
and a state $x\in W$. We construct a rooted tree-like model
\[
C(M,x,\varphi)=(W_C,\mathcal R_C,V_C),
\]
which will serve as a finite certificate for the value of $\varphi$
at $x$.

The nodes of $W_C$ are copies of states of the original model.
To distinguish a node of the certificate from the state of $M$ that
it represents, we use a labelling map
\[
\lambda:W_C\longrightarrow W.
\]
The root of the tree is denoted by $\varepsilon$ and is labelled by
the distinguished state $x$:
\[
\lambda(\varepsilon)=x.
\]
If $u\in W_C$ is a node and $y\in W$ is selected as a successor of
$\lambda(u)$, then
\[
u^\frown y
\]
denotes a new child of $u$, with
\[
\lambda(u^\frown y)=y.
\]
Thus different nodes of $W_C$ may carry the same label from $W$.
In particular, the construction should be regarded as a tree of
labelled copies of states of $M$, rather than as a submodel of $M$.

Suppose now that $u\in W_C$ has depth $n<d$ and
\[
\lambda(u)=w.
\]
For every subformula
\[
\Box\psi\in\operatorname{Sub}_{\Box}(\varphi)
\]
such that
\[
\operatorname{md}(\Box\psi)\leq d-n,
\]
Corollary~\ref{cor:modal-witness-set} yields a set
\[
E_{u,\Box\psi}\subseteq\mathcal R[w]
\]
satisfying
\[
|E_{u,\Box\psi}|\leq h(\mathcal L)-1
\]
and
\[
\bigwedge_{y\in E_{u,\Box\psi}}
\llbracket\psi\rrbracket_M(y)
=
\llbracket\Box\psi\rrbracket_M(w).
\]
The set $E_{u,\Box\psi}$ contains the successors needed to preserve
the exact value of the particular subformula $\Box\psi$ at the state
$w$.

We collect the witnesses required for all relevant $\Box$-subformulas
by putting
\[
E_u=
\bigcup
\left\{
E_{u,\Box\psi}:
\Box\psi\in\operatorname{Sub}_{\Box}(\varphi),
\ \operatorname{md}(\Box\psi)\leq d-n
\right\}.
\]
For every $y\in E_u$, introduce one child $u^\frown y$ of $u$ and set
\[
\lambda(u^\frown y)=y.
\]
The relation $\mathcal R_C$ is the parent--child relation generated by
these choices:
\[
u\,\mathcal R_C\,u^\frown y
\qquad\text{whenever }y\in E_u.
\]
No children are introduced at nodes of depth $d$.

The following diagram illustrates one step of the construction. If
$y_1,y_2\in E_u$, then the certificate contains children labelled by
the corresponding successors of $\lambda(u)$ in the original model:
\[
\begin{tikzcd}[row sep=large,column sep=huge]
	&
	u
	\arrow[dl,"\mathcal R_C"']
	\arrow[dr,"\mathcal R_C"]
	\arrow[d,dashed,"\lambda"]
	&
	\\
	u^\frown y_1
	\arrow[d,dashed,"\lambda"']
	&
	w=\lambda(u)
	\arrow[dl,"\mathcal R"']
	\arrow[dr,"\mathcal R"]
	&
	u^\frown y_2
	\arrow[d,dashed,"\lambda"]
	\\
	y_1
	&&
	y_2 .
\end{tikzcd}
\]
Thus every edge of the certificate follows an edge of the original
model under the labelling map, but only the successors required for
the relevant $\Box$-subformulas are retained.

Finally, the valuation on the certificate is inherited through
$\lambda$. For every propositional variable $p$ occurring in
$\varphi$, define
\[
V_C(p)(u)=V(p)(\lambda(u)).
\]
Hence a node of the certificate and the original state labelling it
have exactly the same atomic truth values.

The construction has depth at most
\[
d=\operatorname{md}(\varphi),
\]
and each node has only finitely many children. The precise branching
and size bounds are established below. The next theorem shows that
this pruning of the original successor structure does not change the
exact value of any subformula that is relevant at the corresponding
depth.






\begin{thm}[Bounded exact-value certificate]
	\label{thm:bounded-certificate}
	Let $u\in W_{\mathcal C}$ have depth $n$. For every subformula
	$\theta$ of $\varphi$ satisfying
	\[
	\operatorname{md}(\theta)\leq d-n,
	\]
	we have
	\[
	\llbracket\theta\rrbracket_{\mathcal C}(u)
	=
	\llbracket\theta\rrbracket_M(\lambda(u)).
	\]
	In particular,
	\[
	\llbracket\varphi\rrbracket_{\mathcal C}(\varepsilon)
	=
	\llbracket\varphi\rrbracket_M(x).
	\]
	\begin{proof}
		We prove the statement by structural induction on $\theta$,
		simultaneously for all nodes $u\in W_C$ satisfying the indicated
		depth condition.
		
		If $\theta=p$ is a propositional variable occurring in $\varphi$,
		then, by the definition of $V_C$,
		\[
		\llbracket p\rrbracket_C(u)
		=
		V_C(p)(u)
		=
		V(p)(\lambda(u))
		=
		\llbracket p\rrbracket_M(\lambda(u)).
		\]
		The cases $\theta=0$ and $\theta=1$ are immediate. The cases of $\wedge$, $\vee$, $\to$, and the truth-value tests
		$T_\ell$ follow directly from the induction hypothesis, since these
		operations are evaluated pointwise.
		
		Suppose now that
		\[
		\theta=\Box\psi
		\]
		and that $u$ has depth $n$. By assumption,
		\[
		\operatorname{md}(\Box\psi)\leq d-n.
		\]
		Hence $\Box\psi$ is one of the relevant boxed subformulas considered
		in the construction at $u$, and therefore the witness set
		\[
		E_{u,\Box\psi}\subseteq E_u
		\]
		was selected.
		
		Every $y\in E_u$ gives rise to a child $u^\frown y$ of $u$, and
		\[
		\lambda(u^\frown y)=y.
		\]
		Since
		\[
		\operatorname{md}(\Box\psi)
		=
		\operatorname{md}(\psi)+1
		\leq d-n,
		\]
		we have
		\[
		\operatorname{md}(\psi)
		\leq d-(n+1).
		\]
		The child $u^\frown y$ has depth $n+1$. Hence the induction
		hypothesis applies to $\psi$ at every such child, giving
		\[
		\llbracket\psi\rrbracket_C(u^\frown y)
		=
		\llbracket\psi\rrbracket_M(y)
		\qquad (y\in E_u).
		\]
		Therefore,
		\[
		\begin{aligned}
			\llbracket\Box\psi\rrbracket_C(u)
			&=
			\bigwedge_{y\in E_u}
			\llbracket\psi\rrbracket_C(u^\frown y)\\
			&=
			\bigwedge_{y\in E_u}
			\llbracket\psi\rrbracket_M(y).
		\end{aligned}
		\]
		Now
		\[
		E_{u,\Box\psi}
		\subseteq
		E_u
		\subseteq
		\mathcal R[\lambda(u)].
		\]
		Since taking the meet over a larger set can only decrease the meet,
		\[
		\begin{aligned}
			\bigwedge_{y\in\mathcal R[\lambda(u)]}
			\llbracket\psi\rrbracket_M(y)
			&\leq
			\bigwedge_{y\in E_u}
			\llbracket\psi\rrbracket_M(y)\\
			&\leq
			\bigwedge_{y\in E_{u,\Box\psi}}
			\llbracket\psi\rrbracket_M(y).
		\end{aligned}
		\]
		By the choice of $E_{u,\Box\psi}$,
		\[
		\bigwedge_{y\in E_{u,\Box\psi}}
		\llbracket\psi\rrbracket_M(y)
		=
		\llbracket\Box\psi\rrbracket_M(\lambda(u))
		\]
		and, by the semantics of $\Box$,
		\[
		\llbracket\Box\psi\rrbracket_M(\lambda(u))
		=
		\bigwedge_{y\in\mathcal R[\lambda(u)]}
		\llbracket\psi\rrbracket_M(y).
		\]
		Thus the first and last terms in the preceding inequalities are equal.
		Hence
		\[
		\bigwedge_{y\in E_u}
		\llbracket\psi\rrbracket_M(y)
		=
		\llbracket\Box\psi\rrbracket_M(\lambda(u)).
		\]
		Consequently,
		\[
		\llbracket\Box\psi\rrbracket_C(u)
		=
		\llbracket\Box\psi\rrbracket_M(\lambda(u)).
		\]
		
		This completes the induction.
		
		Finally, the root $\varepsilon$ has depth $0$,
		$\lambda(\varepsilon)=x$, and
		\[
		\operatorname{md}(\varphi)=d.
		\]
		Applying the result with $\theta=\varphi$ gives
		\[
		\llbracket\varphi\rrbracket_C(\varepsilon)
		=
		\llbracket\varphi\rrbracket_M(x).
		\]

		
		
	\end{proof}
	
\end{thm}

\begin{prop}[Size bound]\label{prop:certificate-size}
Every node of $C(M,x,\varphi)$ has at most
$b_{\mathcal L,\varphi}$ children. Consequently,
\[
|W_C|
\leq
1+b_{\mathcal L,\varphi}
+b_{\mathcal L,\varphi}^{2}
+\cdots+
b_{\mathcal L,\varphi}^{d}.
\]
In particular,
\[
|W_C|\leq
\begin{cases}
	1,
	& b_{\mathcal L,\varphi}=0,\\[1mm]
	d+1,
	& b_{\mathcal L,\varphi}=1,\\[2mm]
	\displaystyle
	\frac{
		b_{\mathcal L,\varphi}^{\,d+1}-1
	}{
		b_{\mathcal L,\varphi}-1
	},
	& b_{\mathcal L,\varphi}>1.
\end{cases}
\]
	\begin{proof}
	Let $u\in W_C$. For each relevant subformula
	$\Box\psi\in\operatorname{Sub}_{\Box}(\varphi)$, the construction
	selects a set
	\[
	E_{u,\Box\psi}
	\]
	with
	\[
	|E_{u,\Box\psi}|
	\leq h(\mathcal L)-1.
	\]
	Since
	\[
	E_u
	=
	\bigcup
	\left\{
	E_{u,\Box\psi}:
	\Box\psi\in\operatorname{Sub}_{\Box}(\varphi),
	\ \operatorname{md}(\Box\psi)\leq d-\operatorname{depth}(u)
	\right\},
	\]
	we obtain
	\[
	\begin{aligned}
		|E_u|
		&\leq
		\sum_{\Box\psi\in\operatorname{Sub}_{\Box}(\varphi)}
		|E_{u,\Box\psi}|\\
		&\leq
		\bigl(h(\mathcal L)-1\bigr)m_{\Box}(\varphi)\\
		&=
		b_{\mathcal L,\varphi}.
	\end{aligned}
	\]
	Thus every node has at most $b_{\mathcal L,\varphi}$ children. The construction has depth at most $d$. Hence the number of nodes at depth $n$ is at most $b_{\mathcal L,\varphi}^{\,n}$, and therefore
	\[
	|W_C|
	\leq
	\sum_{n=0}^{d} b_{\mathcal L,\varphi}^{\,n}.
	\]
	The three displayed bounds follow from this finite geometric sum.
	\end{proof}
	
\end{prop}

The bound in Proposition \ref{prop:certificate-size} depends only on
$\mathcal L$ and the syntactic structure of $\varphi$. In particular,
it is independent of the number of states of the original model.

\subsection{Reduced counterexample extraction}\label{subsec:counterexample-extraction}

Let
\[
\Sigma_\varphi=\operatorname{Var}(\varphi)
\]
and let
\[
C=C(M,x,\varphi)
=(W_C,\mathcal R_C,V_C)
\]
be the bounded certificate constructed above. By
Proposition~\ref{prop:certificate-size}, the model $C$ is finite.
Hence the reduction of Section~4 can be applied to $C$ relative to
the vocabulary $\Sigma_\varphi$. Let
\[
\equiv_C^{\Sigma_\varphi}
\]
denote the corresponding observational equivalence on $W_C$, and
write
\[
C_\varphi^{\mathrm{red}}
\]
for the resulting reduced model. Its state set is
\[
W_C/{\equiv_C^{\Sigma_\varphi}}.
\]
In particular, $[\varepsilon]$ denotes the equivalence class of the
root $\varepsilon$ of the certificate.

\begin{thm}[Reduced exact-value certificate]
	\label{thm:reduced-certificate}
Let
\[
a=\llbracket\varphi\rrbracket_M(x).
\]
Then
\[
\llbracket\varphi\rrbracket_
{C_\varphi^{\mathrm{red}}}([\varepsilon])
=
a.
\]
Moreover,
\[
|W_{C_\varphi^{\mathrm{red}}}|
\leq
1+b_{\mathcal L,\varphi}
+b_{\mathcal L,\varphi}^{2}
+\cdots+
b_{\mathcal L,\varphi}^{d}.
\]
	\begin{proof}
		By Theorem~\ref{thm:bounded-certificate},
		\[
		\llbracket\varphi\rrbracket_C(\varepsilon)
		=
		\llbracket\varphi\rrbracket_M(x)
		=
		a.
		\]
		Since $\Sigma_\varphi$ contains all propositional variables occurring
		in $\varphi$, Theorem \ref{thm:exact-preservation} applies to the reduction of $C$. Hence
		\[
		\llbracket\varphi\rrbracket_
		{C_\varphi^{\mathrm{red}}}([\varepsilon])
		=
		\llbracket\varphi\rrbracket_C(\varepsilon)
		=
		a.
		\]
		
		Furthermore, $C_\varphi^{\mathrm{red}}$ is a quotient of $C$, so
		\[
		|W_{C_\varphi^{\mathrm{red}}}|
		\leq |W_C|.
		\]
		The required bound therefore follows from
		Proposition~\ref{prop:certificate-size}.

	\end{proof}
\end{thm}
Thus the two constructions preserve the same exact value:
\[
\llbracket\varphi\rrbracket_M(x)
=
\llbracket\varphi\rrbracket_C(\varepsilon)
=
\llbracket\varphi\rrbracket_
{C_\varphi^{\mathrm{red}}}([\varepsilon]).
\]

\begin{cor}[Counterexample extraction]
	\label{cor:counterexample-extraction}
If
\[
\llbracket\varphi\rrbracket_M(x)\neq1,
\]
then
\[
\bigl(C_\varphi^{\mathrm{red}},[\varepsilon]\bigr)
\]
is a finite pointed counterexample to $\varphi$. Moreover, it
preserves the exact failure value:
\[
\llbracket\varphi\rrbracket_
{C_\varphi^{\mathrm{red}}}([\varepsilon])
=
\llbracket\varphi\rrbracket_M(x).
\]
\begin{proof}
	This follows immediately from Theorem~\ref{thm:reduced-certificate}.
\end{proof}
\end{cor}

Thus a failure of $\varphi$ at a state $x$ is handled in two stages.
First, the construction of $C(M,x,\varphi)$ retains only bounded
successor information needed to preserve the exact value of
$\varphi$ at $x$. Second, the observational quotient may further
identify certificate states that agree on all modal observations
generated by $\Sigma_\varphi$. Hence the first stage is
formula-directed, whereas the second is observation-directed.
\begin{rem}\label{rem:not-minimal-countermodel}
The model $C_\varphi^{\mathrm{red}}$ is not claimed to be a smallest
countermodel to $\varphi$ among all $\mathcal L$-valued models. Theorem \ref{thm:minimality}, the
minimality result of Section~4, applies only to surjective quotients
of $C$ preserving all observations generated by
$\Sigma_\varphi$.
\end{rem}
\section{Conclusion and future work}\label{6}
We have developed a finite reduction method for Fitting's
$\mathcal L$-valued modal logic using its relational bitopological
representation. For a finite $\mathcal L$-valued Kripke model
$M=(W,\mathcal R,V)$ and a finite vocabulary $\Sigma$, the
state-evaluation map
\[
\eta_M^\Sigma:
W\longrightarrow
\operatorname{Hom}_{\mathcal{VA}_{\mathcal L}}
(A_M^\Sigma,\mathcal L)
\]
identifies precisely those states that agree on all observations
generated by $\Sigma$. The resulting observational quotient
$W_M^\Sigma$ is isomorphic, as a finite relational bitopological
structure, to the image $X_M^\Sigma=\eta_M^\Sigma[W]$ in the dual
space. Under this identification, the reduced accessibility relation
$\mathcal R^{\mathrm{red}}$ coincides with the restriction of the
canonical dual relation. Consequently, every formula over $\Sigma$
preserves its exact $\mathcal L$-truth value under the reduction.
The universal factorization property further shows that this
minimality is relative to surjective reductions preserving the
generated $\Sigma$-observations.

We have also obtained bounded exact-value certificates for individual
formula evaluations. The meet-compression argument shows that the
value of a formula of the form $\Box\psi$ can be witnessed by at most
$h(\mathcal L)-1$ suitably chosen successors. Iterating this
construction up to the modal depth of a formula $\varphi$ yields a
finite tree-like certificate whose branching and size are bounded in
terms of $\mathcal L$ and the syntactic structure of $\varphi$, and
are independent of the size of the original model. Applying the
observational quotient to this certificate gives a finite reduced
pointed counterexample whenever $\varphi$ fails, while preserving its
exact failure value.\\
Several directions remain open. From a computational point of view,
it would be useful to obtain the observational quotient without
explicitly generating the whole algebra $A_M^\Sigma$, for example by
developing a direct refinement procedure based on the observations
relevant to $\Sigma$.

A more substantial direction concerns geometric and fuzzy geometric
logic. Fuzzy geometric logic and its graded consequence relation were
introduced by Chakraborty and Jana
\cite{chakraborty2017fuzzy}, together with the associated notions of
graded frames and fuzzy topological systems. Coalgebraic geometric
logic was subsequently developed by Bezhanishvili et al. \cite{bezhanishvili2022coalgebraic}, using coalgebras over
topological spaces and open predicate liftings. A fuzzy coalgebraic
counterpart was developed by Das et al. \cite{das2024coalgebraic}, where modalities are described by fuzzy-open
predicate liftings over fuzzy topological spaces.

It is an open problem whether suitable finitary fragments of these
fuzzy coalgebraic models admit formula-dependent finite reductions
or bounded exact-value certificates analogous to those obtained
here. Such an extension is not immediate: the meet-compression
argument of the present paper depends essentially on the finite
height of $\mathcal L$, whereas a general fuzzy truth-value algebra
may have infinite height and an infimum need not be determined by
finitely many values. Moreover, any quotient construction would have
to preserve the relevant fuzzy-open predicate liftings and the
underlying coalgebraic structure. Identifying conditions under which
finite exact-value reduction remains possible therefore constitutes
a natural problem for further investigation.

\section*{Funding}
This research received no external funding.
\section*{Conflict of Interest Statement}
The authors declare that they have no known competing financial interests or personal relationships that could have appeared to influence the work reported in this paper.
\section*{Data Availability Statement}
Data sharing is not applicable to this article as no new data were created or analyzed in this study.

\end{document}